\numberwithin{equation}{section}
\numberwithin{figure}{section}
\newtheorem{theorem}{Theorem}
\newtheorem{definition}{Definition}
\newtheorem{claim}{Claim}
\newtheorem*{claim2}{Claim}
\def\eq#1{(\ref{eq:#1})}
\def\half{\fraction{1}{2}}
\def\fraction#1#2{ { \textstyle \frac{#1}{#2} }}
\def\d{\partial}
\def\lineup{\!\!\!\!\!\!\!\!&&}
\def\llangle{\langle\!\langle}
\def\rrangle{\rangle\!\rangle}
\def\leftllangle{\left\langle\!\!\!\left\langle}
\def\rightrrangle{\right\rangle\!\!\!\right\rangle}
\def\Bigllangle{\Big\langle\!\!\Big\langle}
\def\Bigrrangle{\Big\rangle\!\!\Big\rangle}
\def\Psimp{{\Psi_\mathrm{simp}}}
\def\Psch{{\Psi_\mathrm{Sch}}}
\begin{document}

\begin{titlepage}

\begin{center}

\vskip 1.0cm {\large \bf{Exotic Universal Solutions in Cubic Superstring 
Field Theory}}
\\
\vskip 1.0cm

{\large Theodore Erler\footnote{Email: tchovi@gmail.com}}
\vskip 1.0cm

{\it {Institute of Physics of the ASCR, v.v.i.} \\
{Na Slovance 2, 182 21 Prague 8, Czech Republic}}

\vskip 1.0cm
{\bf Abstract}
\end{center}
\noindent We present a class of analytic solutions of cubic 
superstring field theory in the universal sector on a non-BPS D-brane. 
Computation of the action and gauge invariant overlap reveal that the 
solutions carry half the tension of a non-BPS D-brane. However, the solutions
do not satisfy the reality condition. In fact, they display an 
intriguing topological structure: We find evidence that 
conjugation of the solutions is equivalent to a gauge transformation
that cannot be continuously deformed to the identity.  
\noindent 
\medskip

\end{titlepage}

\newpage

\baselineskip=18pt

\tableofcontents

\section{Introduction}

There are two interesting ways to formulate the field equations of an open 
NS superstring. The first comes from Berkovits' nonpolynomial string field 
theory\cite{Berkovits}, and involves a ghost and picture number $0$ 
string field $\Phi$ in the large Hilbert space subject to the equations of 
motion
\begin{equation}\eta_0(e^{-\Phi}Q e^\Phi)=0.\label{eq:berkovits}\end{equation}
The second comes from cubic superstring field theory\cite{PTY,AZ}, and 
involves a ghost number $1$, picture number $0$ string field $\Psi$ in the 
small Hilbert space subject to the equations of motion 
\begin{equation}Q\Psi+\Psi^2 = 0. \label{eq:pty}
\end{equation}
The cubic equations of motion are simpler, in that they are polynomial and 
directly analogous to the field equations for the open bosonic 
string\cite{Witten}, but suffer 
from the disadvantage that they are difficult to derive from a completely 
reliable action\footnote{To evaluate the energy
in this paper, we will use the action originally proposed by Preitschopf, 
Thorne, and Yost\cite{PTY}:
\begin{equation}S=\frac{1}{2}\llangle \Psi Q\Psi\rrangle+\frac{1}{3}\llangle
\Psi^3\rrangle.\end{equation}
The bracket $\llangle\cdot\rrangle$ is defined using the Witten vertex
with a midpoint insertion
\begin{equation}Y_{-2}=Y(i)\tilde{Y}(i),\ \ \ Y(z)=-\partial\xi e^{-2\phi}c(z).
\label{eq:Ym2}\end{equation} See appendix \ref{app:conventions}. 
The problems with this 
action are well-known, including difficulties with the convergence of level 
truncation\cite{Russians_level,Ohmori,Ohmori_rev,Raeymaekers}, complications 
with gauge fixing and perturbation theory\cite{PTY}, problems with the 
Ramond sector\cite{equivalence_Ramond,Kroyter_philosophy}, 
and the existence of a singular kernel for the bracket\cite{Berkovits_critic}. 
Recently there has been some interest in finding a more suitable 
action\cite{Kroyter_philosophy,democratic,BS}, though the success of these 
proposals remains unclear.}. Nevertheless
the Berkovits and cubic equations are known to be perturbatively 
equivalent\cite{equivalence}, and nonperturbatively any Berkovits solution 
generates a cubic solution via the equation
\begin{equation}\Psi=e^{-\Phi}Qe^{\Phi}.\label{eq:Berkcub}\end{equation}
However, the reverse is not true. The existence of a cubic solution $\Psi$
does not guarantee the existence of a Berkovits solution $e^{\Phi}$ satisfying
\eq{Berkcub}. For example, cubic superstring field theory has a ``tachyon
vacuum'' on a BPS D-brane\cite{Erler,Kroyter_myvac}. There is no evidence for 
such a solution in Berkovits' string field theory, either 
analytically\cite{Erler,ES} or numerically\cite{BSZ}. 

In this paper we present a new example of this phenomenon. We show that
the cubic equations of motion on a non-BPS D-brane possess an unexpected 
class of universal solutions which appear not to exist in Berkovits string 
field theory. The existence of these solutions is highly nontrivial, but
their physical interpretation is unknown. They possess a number of 
surprising properties which may be important for our understanding of string 
field theory:
\begin{itemize}
\item The solutions are not real. In fact, every solution appears to be 
related to its conjugate by a topologically nontrivial gauge transformation.
\item The solutions appear not to exist in Berkovits' string field theory.
\item If we ignore the reality condition and compute observables,  
the solutions turn out to carry half the tension of a non-BPS D-brane.
\end{itemize}
We will call them {\it half-brane solutions}, in accordance with their 
tension. The solutions are significant in that they appear to be the 
first examples of topological solutions in open string field theory. 
We hope that they can provide a deeper understanding of the 
topology of the string field algebra, with the ultimate goal of providing a 
``microscopic'' description of D-brane charges in the context of string 
field theory.

This paper is organized as follows. In section \ref{sec:solution} we construct
half-brane solutions by extending the wedge algebra to include 
generators of worldsheet supersymmetry. We attempt an analogous 
construction in Berkovits string field theory, and show that it fails. 
In section \ref{sec:reality} we prove that 
half-brane solutions do not satisfy the string field reality condition. We 
also show, within a controlled subalgebra of states, that every half-brane 
solution is related to its conjugate by a topologically nontrivial gauge 
transformation. In section \ref{sec:phantom} we discuss the regularization 
and phantom 
piece for the half-brane solution. The phantom term offers an interesting 
perspective on the nature of convergence in the wedge algebra, and suggests 
a more general technique for constructing states in the wedge 
algebra---including, possibly, projector states distinct from 
the sliver and identity string field. In section 
\ref{sec:observables} we calculate the action and closed string tadpole.  
We find highly nontrivial agreement between these observables, indicating 
that the solutions represent a state with half the tension of a non-BPS 
D-brane. We end with some discussion.

\section{Solution}
\label{sec:solution}

\subsection{Algebra} 
\label{subsec:algebra}

To begin we need to recall some facts about the algebra of string 
fields\footnote{In this paper we use the left handed convention for the star 
product\cite{simple}. Other standard sources for the 
superstring\cite{Ohmori,Raeymaekers,BSZ} use the right handed 
convention\cite{Okawa}, and there are some important 
sign differences in the GSO($-$) sector. See appendix 
\ref{app:conventions}.}
on a non-BPS D-brane. The algebra has two $\mathbb{Z}_2$ gradings: Grassmann 
parity $\epsilon$, which corresponds to the Grassmann parity of the vertex 
operator creating the string field; and worldsheet spinor 
number $F$, which tells us whether the field is in the GSO($+$) or GSO($-$) 
sector. Fields in the algebra are assigned internal Chan-Paton factors 
according to the table:
\begin{center}
\begin{tabular}{|c|c|c|c|}\hline
$\ \ \ \epsilon\ \ \ $ & $\ \ \ F\ \ \ $ & CP factor
\\ \hline
$0$ & $0$ & $\mathbb{I}$  \\ \hline
$1$ & $0$ & $\sigma_3$  \\ \hline
$0$ & $1$ & $\sigma_2$ \\ \hline
$1$ & $1$ & $\sigma_1$ \\ \hline
\end{tabular}.
\end{center}
The BRST charge $Q$ and the midpoint insertion $Y_{-2}$ both implicitly 
carry an internal CP factor of $\sigma_3$, and the 1-string vertex 
$\llangle \cdot\rrangle$ automatically contains a factor of $1/2$ times 
the trace over internal CP matrices. To keep track of signs when commuting
vertex operators and CP factors past each other, it is helpful to define what
we will call {\it effective Grassmann parity}
\begin{equation}E=\epsilon+F\ \ \ \ (\mathrm{mod}\ 2).\end{equation}
In particular, the star algebra has a natural graded 
commutator\footnote{This ``double bracket'' commutator should be 
distinguished from the graded commutator
$[\Psi,\Phi]=\Psi\Phi-(-1)^{E(\Psi)E(\Phi)}\Phi\Psi$ which emerges naturally
from the action, both in the infinitesimal gauge transformation and the 
kinetic operator around a nontrivial solution. The single bracket $[,]$ is 
only graded according to effective Grassmann parity.}
\begin{equation}\llbracket\Psi,\Phi\rrbracket = \Psi\Phi - 
(-1)^{E(\Psi)E(\Phi)+F(\Psi)F(\Phi)}\Phi\Psi,\end{equation}
where $\Psi,\Phi$ implicitly carry the appropriate CP factor. This 
suggests that the star product on a non-BPS brane has a structure 
analogous to a product of matrices whose entries contain two mutually 
commuting types of Grassmann number, the first has a Grassmannality measured 
by $E$ and the second by $F$. However, only effective Grassmann parity 
enters into the string field theory 
axioms:
\begin{eqnarray}Q(\Psi\Phi)\lineup = (Q\Psi)\Phi+(-1)^{E(\Psi)}\Psi(Q\Phi),
\nonumber\\
\llangle \Psi\Phi\rrangle \lineup= (-1)^{E(\Psi)E(\Phi)}
\llangle \Phi\Psi \rrangle.\end{eqnarray}
In particular, the physical string field $\Psi$ on a non-BPS D-brane 
must be {\it effective} Grassmann odd. For example, the 
tachyon vertex operator $\gamma(0)$ is Grassmann even in the traditional 
sense, but since it carries worldsheet spinor number, it counts 
as ``effectively'' Grassmann odd.

\begin{table}[t]
\begin{center}
\begin{tabular}{|c|c|c|c|c|c|c|}\hline
& ${\mathrm{ghost} \atop \mathrm{number}}$ 
& ${\mathrm{effective}\atop \mathrm{Grassmann\ parity}}$ 
& ${\mathrm{worldsheet}\atop \mathrm{spinor\ number}}$ 
& ${\mathrm{scaling}\atop\mathrm{dimension}}$ 
& reality 
& twist \\ \hline
$K$ & $0$ & $0$ & $0$ & $1$ & real & $1$ \\ \hline
$B$ & $-1$ & $1$ & $0$ & $1$ & real & $1$ \\ \hline
$c$ & $1$ & $1$ & $0$ & $-1$ & real & $-1$ \\ \hline
$G$ & $0$ & $0$ & $1$ & $\frac{1}{2}$ & real & $-i$ \\ \hline
$\gamma$ & $1$ & $1$ & $1$ & $-\frac{1}{2}$ & real & $-i$ \\ \hline
\end{tabular}
\end{center}
\caption{\label{tab:alg} Some important quantum numbers for the 
atomic fields. Scaling dimension refers to the eigenvalue of the 
field under the action of the operator
$\frac{1}{2}\mathcal{L}^-=\frac{1}{2}(\mathcal{L}_0-\mathcal{L}_0^\star)$.
Reality and twist refer to the eigenvalues of the fields under
reality and twist conjugation, defined in appendix 
\ref{app:conventions}. By ``real'' we mean that the fields have eigenvalue
$1$ under reality conjugation.}
\end{table}

With these preparations we are ready to give the algebraic setup for our 
solution. The solution is constructed by taking star products of 
four ``atomic'' string fields: 
\begin{equation}K,\ \ \ \ B,\ \ \ \ c,\ \ \ \ G.\ \end{equation}
The ghost number, effective Grassmann parity, and some other important 
quantum numbers of these fields are summarized in table \ref{tab:alg}. 
We can construct $K,B,c,G$ by acting certain operators on the 
identity string field $|I\rangle$: 
\begin{eqnarray}K \lineup = \mathbb{I}\otimes \mathcal{L}^+_L|I\rangle,
\ \ \ \ \ \ \ \ \ \ \ B = \sigma_3\otimes \mathcal{B}^+_L|I\rangle,\nonumber\\
c \lineup = \sigma_3\otimes\frac{1}{\pi}c(1)|I\rangle,\ \ \ \ \ \ 
G=\sigma_1\otimes \mathcal{G}_L|I\rangle.
\end{eqnarray}
The subscript $L$ above denotes taking the left half of the charges:
\begin{eqnarray}
\mathcal{L}^+ \lineup = \mathcal{L}_0+\mathcal{L}_0^\star, 
\ \ \ \ \ \ \ \ \ \ \ \ 
\mathcal{L}_0 = f_\mathcal{S}^{-1}\circ L_0,\nonumber\\
\mathcal{B}^+ \lineup = \mathcal{B}_0+\mathcal{B}_0^\star, 
\ \ \ \ \ \ \ \ \ \ \ \ 
\mathcal{B}_0 = f_\mathcal{S}^{-1}\circ b_0,\nonumber\\
\mathcal{G}\ \, \lineup = f_\mathcal{S}^{-1}\circ G_{-1/2},
\label{eq:SS_op}\end{eqnarray} 
where $f_\mathcal{S}^{-1}(z) = \tan\frac{\pi}{2}z$ is the inverse of the 
sliver conformal map\cite{Schnabl,RZO} and the star $^\star$ denotes BPZ 
conjugation. Another definition of these fields is given by mapping them
to operator insertions inside correlation functions on the cylinder:
\begin{eqnarray}
K\lineup \rightarrow \mathbb{I}
\int_{-i\infty}^{i\infty}\frac{dz}{2\pi i} T(z),
\ \ \ \ \ \ \,
B\rightarrow \sigma_3\int_{-i\infty}^{i\infty}\frac{dz}{2\pi i}b(z),
\nonumber\\
c\lineup \rightarrow \sigma_3 c(z),\ \ \ \ \ \ \ \ \ \ \ \ \ \ \ \ \ \ 
 G\rightarrow \sigma_1\int_{-i\infty}^{i\infty}\frac{dz}{2\pi i}G(z).
\end{eqnarray}
See \cite{Okawa,SSF1} and the appendix of \cite{simple} for an explanation
of how this mapping works. The essentially new ingredient in our algebraic 
setup is the string field $G$. It lives in the GSO($-$) sector, and corresponds
to a line integral insertion of the worldsheet supercurrent $G(z)$. 
Note that to define $G$ we need to ``split'' the operator $\mathcal{G}$ into 
left and right halves. Such splittings are potentially anomalous\cite{RZ}, 
but in this case the splitting appears to be regular (see appendix 
\ref{app:G} for more details). 

The fields $K,B,c,G$ freely generate a subalgebra of the open 
string star algebra subject to the relations
\begin{eqnarray}\lineup G^2 = K, \ \ \ \ \ Bc+cB = 1,\ \ \ \ \ B^2=c^2=0,
\nonumber\\
\lineup K,B,G\ \mathrm{mutually\  commute}.\label{eq:alg}\end{eqnarray}
$K$ generates the algebra of wedge states\cite{Okawa,RZ_wedge} in the sense
that any star-algebra power of the $SL(2,\mathbb{R})$ vacuum 
$\Omega=|0\rangle$ can be written $\Omega^\alpha = e^{-\alpha K}$. It is
useful to define the operators: 
\begin{equation}\partial = [K,\cdot],\ \ \ \ \ \ 
\delta = \llbracket G,\cdot\rrbracket. \end{equation}
In the cylinder coordinate frame, $\partial$ generates an infinitesimal 
worldsheet translation and $\delta$ generates a worldsheet 
supersymmetry variation. They are derivations of the star product:
\begin{eqnarray}
\partial(\Psi\Phi)\lineup = (\partial\Psi)\Phi + \Psi(\partial\Phi),\nonumber\\
\delta(\Psi\Phi)\lineup = (\delta\Psi)\Phi + (-1)^{F(\Psi)}\Psi(\delta\Phi).
\label{eq:dderiv}\end{eqnarray}
Since the supersymmetry variation of $c$ produces the $\gamma$ 
ghost, it is helpful to introduce the corresponding string field:
\begin{equation}\gamma = 
\sigma_2 \otimes \frac{1}{\sqrt{\pi}}\gamma(1)|I\rangle \ \rightarrow\ 
\sigma_2\gamma(z)= \sigma_2\eta e^\phi(z).\end{equation}
Then we have
\begin{eqnarray}\delta c\ = \lineup 2i\gamma,
\ \ \ \ \ \ \ \delta\gamma \ = -\frac{i}{2}\d c,\nonumber\\
\delta G = \lineup 2K,\ \ \ \ \ \ \ \delta K=0,\ \ \ \ \ \ \ \ \ \ \ \ \ \ 
\delta B=0.
\label{eq:dother}\end{eqnarray} 
Note that $\delta$ satisfies the supersymmetry algebra $\delta^2=\partial$.
Since $K$ is the worldsheet superpartner of $G$, together these fields 
generate a supersymmetric extension of the wedge algebra, which we will call
the {\it wedge superalgebra}. 

The algebra generated by $K,B,c,G$ is closed under the action of the 
BRST operator
\begin{equation}QK = 0,\ \ \ \ \ QB=K,\ \ \ \ \
Qc = cKc-\gamma^2,\ \ \ \ \ \  QG=0.\label{eq:BRST}\end{equation}
Therefore it makes sense to look for solutions to the cubic equations of
motion
\begin{equation}Q\Psi+\Psi^2=0\end{equation}
within this subalgebra.

\subsection{Half-Brane Solutions}
\label{subsec:solutions}

In this paper we study solutions of the form
\begin{equation}\Psi[f] = \left(c\frac{KB}{1-f}c + B\gamma^2\right)f,
\label{eq:sol1}\end{equation}
where $f$ is a string field in the wedge superalgebra. Multiplying and 
dividing by $\sqrt{f}$ gives a gauge equivalent solution
\begin{equation}\hat{\Psi}[f] = 
\sqrt{f}\left(c\frac{KB}{1-f}c+B\gamma^2\right)\sqrt{f},\label{eq:sol_real}
\end{equation}
which is twist symmetric\footnote{By twist symmetric, we mean that the GSO($+$)
and GSO($-$) components of the solution separately have definite eigenvalue 
under twist conjugation. In particular, the GSO($+$) component is made of 
states whose $L_0+1$ eigenvalues are even integers, and the GSO($-$) 
component is made of states whose $L_0+\frac{1}{2}$ eigenvalues are odd 
integers.}. These are exactly the 
same formal expressions which give the pure gauge and tachyon vacuum 
solutions of \cite{Erler}. The only 
new ingredient here is $f$, which can depend on $G$. Explicitly,
\begin{equation}f=f_++Gf_-,\end{equation} where $f_\pm=f_\pm(K)$ are functions 
of $K$ only. In terms of $f_\pm$ the solution takes the form
\begin{equation}\Psi[f] = \left(cKB\frac{1-f_++Gf_-}{(1-f_+)^2-Kf_-^2}c
+B\gamma^2\right)(f_++Gf_-).\label{eq:sol2}\end{equation}
With a little extra work we can also compute $\sqrt{f_++Gf_-}$ to find the 
twist symmetric solution. 

The physical interpretation of these solutions depends on the choice of 
$f_\pm$. To see how, it is helpful 
to employ a formal analysis in the $\mathcal{L}^-$ level expansion, which is 
an easy and apparently reliable method for identifying gauge orbits in 
solutions of this type\cite{simple}. Recall that 
$\mathcal{L}^-=\mathcal{L}_0-\mathcal{L}_0^\star$ 
is a reparameterization generator and a derivation. This means that 
the star product of two $\mathcal{L}^-$ eigenstates is itself an eigenstate, 
and the eigenvalues add. Since $K,B,c,G$ are eigenstates of $\mathcal{L}^-$
(see table \ref{tab:alg}), we can find the $\mathcal{L}^-$ level expansion 
of $\Psi[f]$ by expanding in powers of $K$ and ordering the terms in 
sequence of increasing scaling dimension. The expansion can take one of three
different forms, depending on the behavior of $f_\pm$ at $K=0$:
\begin{eqnarray}\mathrm{Pure\ Gauge}: \lineup\ \ \ \ f_+(0)\neq 1,
\label{eq:bc_gauge}\\
\mathrm{Half\ Brane}: \lineup\ \ \ \ f_+(0)=1,\ \ \ f_-(0) \neq 0,
\label{eq:bc_exotic}\\
\mathrm{Tachyon\ Vacuum}: \lineup\ \ \ \ f_+(0)= 1,\ \ \ f_-(0)=0,\ \ \ 
f_+'(0)\neq 0, \label{eq:bc_vacuum}
\end{eqnarray}
corresponding to the expansions
\begin{eqnarray}\mathrm{Pure\ Gauge}:\ \ \ \lineup 
\Psi = \frac{f_+(0)}{1-f_+(0)}\,Q(Bc)- \frac{f_+(0)^2}{1-f_+(0)}\,B\gamma^2
\ + \ ...,\nonumber\\
\mathrm{Half\ Brane}:\ \ \  \lineup \Psi = -\frac{1}{f_-(0)}\,cGBc\ + 
\ ...,\nonumber\\
\mathrm{Tachyon\ Vacuum}:\ \ \  \lineup \Psi = -\frac{1}{f_+'(0)}\,c\ + 
\ ...,\end{eqnarray}
where $...$ denotes higher level terms. Each expansion formally 
corresponds to a physically distinct gauge orbit\footnote{Following 
\cite{simple}, one can construct a formal gauge transformation
relating solutions with different choices of $f$: 
$\Psi[f'] = g^{-1}(Q+\Psi[f])g$. However, the gauge transformation breaks 
down if $f$ and $f'$ do not share the same boundary conditions at $K=0$, 
\eq{bc_gauge}-\eq{bc_vacuum}, since either $g$ or $g^{-1}$ would formally
require inverse powers of $K$ in its $\mathcal{L}^-$ level expansion. Inverse 
powers of $K$ are not constructible states within the wedge algebra.} 
within our general ansatz (see figure \ref{fig:exotic_3sol}). The pure gauge 
and tachyon vacuum solutions are known\cite{Erler}, but the so-called
{\it half-brane solutions} are new. These are the main subject of this 
paper.

\begin{figure}
\begin{center}
\resizebox{3.5in}{1.9in}{\includegraphics{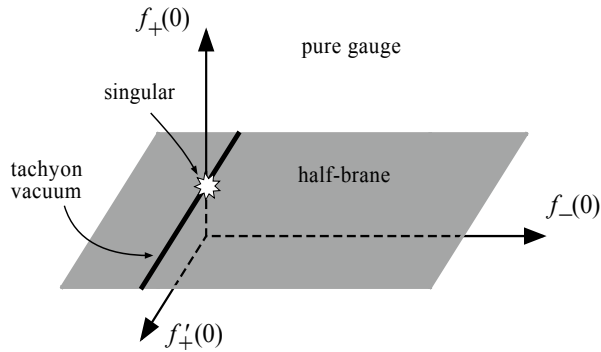}}
\end{center}
\caption{\label{fig:exotic_3sol} Three dimensional ``phase space'' of 
solutions, parameterized by $f_+(0),f_-(0)$ and $f_+'(0)$. Tachyon vacuum 
solutions sit on a line embedded in a plane of half-brane solutions, which 
themselves are embedded in an ambient space of pure gauge solutions. 
Note that the point $f_+(0)=1,f_-(0)=0,f_+'(0)=0$ appears to represent 
singular solutions.} 
\end{figure}

Often one can get insight into the physics of a solution by inspecting its 
leading term in the $\mathcal{L}^-$ level expansion. For the 
tachyon vacuum the leading term is proportional to the $c$ ghost, which
is responsible for the absence of cohomology at the vacuum\cite{Erler}. For 
pure gauge solutions, the leading term is BRST exact to linear order, 
corresponding to the fact that pure gauge solutions represent a deformation 
of the perturbative vacuum by a trivial element of the BRST cohomology. 
For half brane solutions, the full meaning of the leading term $cGBc$ is 
not clear to us. However, it is worth noting that $cGBc$ has twist 
eigenvalue $+i$:
\begin{equation}(cGBc)^\S = +i\, cGBc.\end{equation}
Therefore half-brane solutions result from condensation of
states in the GSO($-$) sector carrying odd integer eigenvalues of 
$L_0+\frac{1}{2}$. This is peculiar since all of these states carry positive 
mass squared. The more familiar states responsible for tachyon 
condensation carry even integer $L_0+\frac{1}{2}$, and in fact these states 
have vanishing expectation value in the twist even solution \eq{sol_real}. 
The fact that half-brane solutions result from ``condensation'' of massive 
modes of the open string is one way to anticipate that they cannot satisfy the
reality condition.

Let us give two explicit examples of half-brane solutions. The first comes 
by setting
\begin{equation}f= f_++Gf_- = \frac{1}{1-iG},\end{equation}
and takes the form
\begin{equation}\Psimp = \Big[i cGBc +Q(Bc)\Big]\frac{1+iG}{1+K}.
\label{eq:simple}\end{equation}
We will explain the factor of $i$ shortly. We will call this the 
{\it simple} half-brane solution, 
since it is in many ways analogous to the ``simple'' tachyon vacuum 
introduced in \cite{simple}. In 
particular, \eq{simple} requires no phantom term, and gives
the most straightforward calculation of the action and gauge invariant 
overlap. Another solution, which is likely to be better behaved in the 
level expansion (see appendix \ref{app:level} and \cite{simple}), comes from 
setting
\begin{equation}f= f_++Gf_- = (1+ia G)\Omega,\label{eq:Sch_f}\end{equation}
where $a\neq 0$ is a parameter. It takes the form,
\begin{equation}
\Psch = \left[c\frac{KB(1-\Omega+ia G\Omega)}{(1-\Omega)^2+a^2K\Omega^2}c
+B\gamma^2\right](1+ia G)\Omega.\label{eq:Schnabl}\end{equation}
Unlike \eq{simple}, this solution is composed of wedge states whose angles 
have strictly positive lower bound. We will call it the 
{\it Schnabl-like} solution. To compute the action 
or gauge invariant overlap, we should express 
the solution as a regularized sum subtracted against a phantom term. We will 
explain how to do this in section \ref{sec:phantom}. 

\subsection{Half-Brane Solutions in Berkovits' String Field Theory}
\label{subsec:berkovits}

We would now like to know whether half-brane solutions exist in Berkovits' 
string field theory. The task is to find a pair of string fields 
$(g,g^{-1})$ at ghost and picture number zero, and in the large Hilbert space,
satisfying
\begin{eqnarray}Qg \lineup = g\Psi, \label{eq:Berk1}\\ g^{-1}g \lineup = 
gg^{-1}= 1.\label{eq:Berk2}\end{eqnarray}
where $\Psi$ is a cubic half-brane solution. Within a certain subalgebra of 
states, we will show that these equations have no solutions for $g$ and 
$g^{-1}$. A similar approach can be used to argue that Berkovits' string 
field theory does not have a tachyon vacuum solution on a BPS D-brane\cite{ES}.

To solve equations \eq{Berk1} and \eq{Berk2}, we must extend our subalgebra
to include fields in the large Hilbert space. The minimal and most natural
extension is to include the string field
\begin{equation}A = -\sigma_3\otimes \xi\d\xi e^{-2\phi}c(1)|I\rangle
\end{equation}
which satisfies
\begin{equation}QA=1,\ \ \ \ A\gamma^2=-c,\ \ \ \ Ac=cA=0,\ \ \ \ 
\llbracket \gamma,A\rrbracket=0,\ \ \ \ \llbracket \d c,A\rrbracket=0.
\end{equation}
$A$ describes an insertion of an inverse picture changing operator 
multiplied by the $\xi$ zero mode. It has ghost number $-1$, is
effective Grassmann odd, carries even worldsheet spinor number, and has 
scaling dimension $0$. Naively, the field $A$ is enough to generate any 
Berkovits solution given any cubic solution. To see how, note that
\begin{equation}g=1+A\Psi.\label{eq:formal_berk}\end{equation}
solves \eq{Berk1}\cite{equivalence,super_photon}. Then, we can {\it almost} 
solve \eq{Berk2} by expressing $g^{-1}$ as an infinite geometric series in 
powers of $-A\Psi$. However, this series is not guaranteed to converge. 
This is why the cubic and Berkovits equations of motion are not 
{\it a priori} equivalent. 

We search for a Berkovits half-brane by making the most general possible ansatz
in the subalgebra generated by $K,B,c,G$ and $A$. Expand $\Psi$ and 
$(g,g^{-1})$ into $\mathcal{L}^-$ eigenstates as follows:
\begin{eqnarray}\Psi\ \lineup = \,
\Psi_{-1/2}+\Psi_0+\Psi_{1/2}+...,\nonumber\\
g\ \lineup = \, g_{-1/2}\, +\,g_0\,+\,g_{1/2}\,+...,\nonumber\\
g^{-1}\lineup = \, \bar{g}_{-1/2}\, +\,\bar{g}_0\,+\,\bar{g}_{1/2}\,+...,
\end{eqnarray}
where the subscripts refers to the $\frac{1}{2}\mathcal{L}^-$ eigenvalue of the
fields. If $\Psi$ is a cubic half-brane solution, its expansion 
takes the general form
\begin{eqnarray}\Psi_{-1/2} \lineup=\, -\frac{1}{\alpha_1+\alpha_2}cGBc,
\nonumber\\
\Psi_0\ \, \lineup= \,
-\frac{\alpha_1}{\alpha_1+\alpha_2}GcGBc-\frac{\alpha_2}{\alpha_1+\alpha_2}
cGBcG+\frac{\beta_1+\beta_2}{(\alpha_1+\alpha_2)^2}cKBc+B\gamma^2,
\nonumber\\
\Psi_{1/2}\ \lineup =\, ...,
\label{eq:lowest_psi}\end{eqnarray}
where $\alpha_1,\alpha_2,\beta_1,\beta_2$ are constants parameterizing the 
gauge orbit up to this level. The most general ansatz for $g$ is 
\begin{eqnarray}g_{-1/2}\,\lineup = x\, A\gamma,\nonumber\\
g_0\ \lineup = y_1+y_2\, Bc+y_3\,A \d c+y_4\,GA\gamma+y_5\,A\gamma G,
\nonumber\\
g_{1/2}\,\lineup = ...,
\label{eq:lowest_g}\end{eqnarray}
where $x$ and $y_1,...,y_5$ are 
coefficients to be determined by solving the equations of motion. We make a 
similar ansatz for $g^{-1}$. Now plug 
these formulas into \eq{Berk1} and solve level by level:
\begin{eqnarray}0\lineup = g_{-1/2}\Psi_{-1/2},\nonumber\\
Qg_{-1/2}\lineup = g_{-1/2}\Psi_0+g_0\Psi_{-1/2},\nonumber\\
\lineup\,\vdots\ \ \ \ \ \ \ \ \ \ \ \ \ \ \ \ \ \ \ \ \ \ \ \ \ \ \ .
\label{eq:Berk1_levels}\end{eqnarray} 
The lowest level equation is trivially satisfied. Plugging \eq{lowest_psi} 
and \eq{lowest_g} into the next equation gives
\begin{equation}xQ(A\gamma) = 
\frac{2i \alpha_1 x-y_1+2iy_5}{\alpha_1+\alpha_2} cGBc+x\, \gamma cB.
\end{equation}
Note that the right hand side is in the small Hilbert space. Acting with 
$\eta_0$ therefore gives
\begin{equation}x\,Q(\eta_0(A\gamma))=0\label{eq:xcoeff}\end{equation}
The field $\eta_0(A\gamma)$ is the zero momentum tachyon in the $-1$ 
picture. Since the zero momentum tachyon is off-shell, \eq{xcoeff} implies
that the coefficient $x$ vanishes, i.e. $g_{-1/2}=0$. 
A similar argument also shows that $\bar{g}_{-1/2}=0$.
Equation \eq{Berk1_levels} then implies that $g_0$ has a right kernel:
\begin{equation}g_0\Psi_{-1/2}=0.\end{equation}
To construct $g^{-1}$, we must solve \eq{Berk2} level by level:
\begin{eqnarray}\bar{g}_{-1/2}g_{-1/2} \lineup = 0,\nonumber\\
\bar{g}_{-1/2}g_0 + \bar{g}_0g_{-1/2} \lineup =0,\nonumber\\
\bar{g}_{-1/2}g_{1/2} +\bar{g}_0g_0+\bar{g}_{1/2}g_{-1/2}\lineup = 1,
\nonumber\\
\lineup\ \vdots \ \ \, .\end{eqnarray}
Since $g_{-1/2}=\bar{g}_{-1/2}=0$ this implies
\begin{equation}\bar{g}_0g_0=1,\end{equation}
but this contradicts the fact that $g_0$ has a right kernel. Therefore, 
the Berkovits half-brane solution does not exist in the $K,B,c,G,A$ 
subalgebra. While it is possible that a more general ansatz is necessary, 
we believe that this subalgebra is rich enough to capture a half-brane 
solution, if one were to exist.\footnote{Fuchs and Kroyter 
suggest\cite{equivalence} a general mapping between cubic and Berkovits 
solutions $g=1+\tilde{A}\Psi$, where $\tilde{A}$ is a midpoint insertion of 
$\xi \d\xi e^{-2\phi}c$. However, this solution appears to be too singular 
to allow for a computation of the Berkovits action.}

\section{Reality Condition}
\label{sec:reality}

Physical solutions in cubic superstring field theory are expected to satisfy 
the reality condition\footnote{The dagger ($^\ddag$) refers to a composition 
of Hermitian and BPZ conjugation. See appendix 
\ref{app:conventions}. Note that this form of the reality condition is correct
only for the left-handed star product convention.}\cite{Ohmori,Zwiebach}
\begin{equation}\Psi^\ddag = \Psi,\label{eq:strong_real}\end{equation}
It is important to ask whether half-brane solutions meet this criterion. 
Surprisingly, the answer is no, according to the following theorem: 
\begin{theorem}
Under assumptions {\bf 1)}-{\bf 4)} stated below, there are no half-brane
solutions in the $K,B,c,G$ subalgebra satisfying the reality condition.
\end{theorem}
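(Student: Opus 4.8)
The plan is to show that the reality condition $\Psi^\ddag = \Psi$ is incompatible with the $\mathcal{L}^-$-level structure that any half-brane solution must have, by examining the lowest level term. Recall that a half-brane solution in the $K,B,c,G$ subalgebra necessarily has leading term $\Psi_{-1/2} = -\frac{1}{f_-(0)}\, cGBc$ in its $\mathcal{L}^-$ expansion. Since reality conjugation $^\ddag$ is a grading-preserving operation on the $\mathcal{L}^-$ eigenspaces (it commutes with $\mathcal{L}^-$ up to a sign that preserves eigenspaces — this should be part of assumption {\bf 1)} or {\bf 2)}), the reality condition can be imposed level by level, and in particular it forces $\Psi_{-1/2}^\ddag = \Psi_{-1/2}$. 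So the first step is to compute the $^\ddag$ conjugate of $cGBc$. Using the reality properties of the atomic fields in table \ref{tab:alg} (all of $K,B,c,G$ are ``real'' in the stated sense), together with the rule for how $^\ddag$ reverses the order of a star product and inserts signs according to effective Grassmann parity and worldsheet spinor number, I expect $(cGBc)^\ddag = -\, cGBc$ or $(cGBc)^\ddag = \pm i\, cGBc$ — in any case, \emph{not} $+cGBc$. This is plausible because $cGBc$ already has twist eigenvalue $+i$, and reality and twist conjugation are closely related in this algebra.

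Concretely, the key steps in order: (i) state precisely assumptions {\bf 1)}–{\bf 4)} — presumably that $\Psi$ lies in the $K,B,c,G$ subalgebra, that it is a genuine half-brane solution (so $f_+(0)=1$, $f_-(0)\neq 0$), that its $\mathcal{L}^-$ expansion is bounded below and begins at level $-1/2$, and perhaps that the coefficient $f_-(0)$ is such that a suitable normalization is possible; (ii) establish that $^\ddag$ preserves $\mathcal{L}^-$ eigenspaces, so $\Psi^\ddag=\Psi$ descends to $\Psi_{-1/2}^\ddag=\Psi_{-1/2}$; (iii) compute $(cGBc)^\ddag$ from the conjugation rules and the entries of table \ref{tab:alg}, obtaining an eigenvalue $\lambda \neq 1$; (iv) conclude that $\Psi_{-1/2}^\ddag = \overline{\left(-\tfrac{1}{f_-(0)}\right)}\,\lambda\, cGBc$, and that matching this to $\Psi_{-1/2}$ would force $\overline{\left(-1/f_-(0)\right)}\lambda = -1/f_-(0)$; since $\lambda\neq 1$ and $f_-(0)\neq 0$, this is impossible (for instance if $\lambda=-1$ it forces $f_-(0)=0$ after taking moduli, contradiction; if $\lambda = \pm i$ there is simply no solution for the phase). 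Hence no half-brane solution in this subalgebra can be real.

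The main obstacle I anticipate is step (iii): getting the sign/phase in $(cGBc)^\ddag$ exactly right. The subtlety is that $G$ carries worldsheet spinor number $F=1$, so commuting $B$ and $c$ past $G$ under the order-reversal picks up the extra signs governed by the $F(\Psi)F(\Phi)$ term in the double-bracket grading, on top of the effective-Grassmann signs; one must also be careful that the midpoint insertion $Y_{-2}$ and the implicit $\sigma_3$ factors do not interfere, and that the BPZ part of $^\ddag$ acting on the line integrals defining $K$ and $G$ produces the twist eigenvalues $1$ and $-i$ recorded in the table rather than their conjugates. A clean way to organize this is to first note that on twist eigenstates the reality condition is equivalent to a statement about twist eigenvalues and ordinary Hermiticity, reducing the whole computation to reading off that $cGBc$ has twist eigenvalue $+i$ (already stated in the excerpt) and checking that it is Hermitian up to a real positive constant — then the mismatch is exactly the factor of $i$. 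The remaining levels of the $\mathcal{L}^-$ expansion need not be analyzed at all, since the obstruction already appears at the bottom.
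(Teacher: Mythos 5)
There is a genuine gap, and it sits exactly at your step (iii). Reality conjugation reverses the order of a star product with no extra signs, $(\Psi\Phi)^\ddag=\Phi^\ddag\Psi^\ddag$, and all four atomic fields $K,B,c,G$ are real, so
\begin{equation}
(cGBc)^\ddag \;=\; c^\ddag B^\ddag G^\ddag c^\ddag \;=\; cBGc \;=\; cGBc,
\end{equation}
where the last equality uses that $B$ and $G$ commute. Thus the leading term $-\tfrac{1}{f_-(0)}\,cGBc$ satisfies the reality condition whenever $f_-(0)$ is real, and no contradiction appears at the bottom of the $\mathcal{L}^-$ expansion — nor at any finite level: the reality condition simply forces $f_+$ and $f_-$ to be real functions of $K$, which is perfectly consistent level by level. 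Your appeal to the twist eigenvalue $+i$ conflates two different conjugations: twist conjugation squares to $(-1)^F$, so GSO($-$) fields legitimately carry imaginary twist eigenvalues, and this is compatible with reality; the factor of $i$ you hope to find in $(cGBc)^\ddag$ is not there.

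Because of this, the theorem cannot be proved by a purely algebraic, lowest-level argument, and indeed your proposal never uses (and mis-guesses) the actual assumptions {\bf 2)}--{\bf 4)}, which are regularity conditions on $f_\pm(K)$ and on $\frac{K}{(1-f_+)^2-Kf_-^2}$ as functions of $K\geq 0$. The obstruction is global rather than level-by-level: once reality forces $f_\pm$ real, the denominator $D(K)=(1-f_+)^2-Kf_-^2$ obeys $D(0)=0$ and $D'(0)=-f_-(0)^2<0$ (by assumption {\bf 1)}), hence $D<0$ somewhere, while $D\to 1$ as $K\to\infty$ (by assumption {\bf 2)}); continuity (assumption {\bf 3)}) then forces a zero of $D$ at strictly positive $K$, so $K/D$ cannot be continuous and assumption {\bf 4)} is violated. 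The explicit examples in section 3 (e.g.\ $f=\frac{1}{1-G}$, or the Schnabl-like solution at $a=-i$) make the point concretely: real $f_\pm$ produce formally consistent expansions whose failure is one of regularity (poles in $K$, divergent geometric series), not an algebraic inconsistency visible at low levels.
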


\begin{proof} Every solution $\Psi$ in the $K,B,c,G$ subalgebra is 
associated with a pair of states in wedge algebra:
\begin{equation}f_+(K),\ \ \ \ \ f_-(K).\end{equation}
We can reconstruct $f_+$ and $f_-$ from the solution by 
solving the equations\footnote{The most general solution in the 
$K,B,c,G$ subalgebra can be found by making the most general 
(formal) pure gauge ansatz, following Okawa\cite{Okawa}. In equation \eq{fpm}
$\beta$ represents a line integral insertion of the $\beta$ ghost in the 
sliver coordinate frame.}
\begin{equation}B\Psi B = B\frac{K(f_++Gf_-)}{1-(f_++Gf_-)}. \ \ \ \ \ \ \ \ 
\llbracket\beta,\llbracket\beta,\Psi\rrbracket\rrbracket 
= B(f_++Gf_-),\label{eq:fpm}
\end{equation}
To prove the theorem, we show that the reality condition is inconsistent with 
certain regularity conditions which must be imposed on $f_+(K)$ and $f_-(K)$. 
The regularity conditions are:
\begin{description}
\item {\bf 1)} $f_+(0)=1$ and $f_-(0)\neq 0$ and in particular $f_+'(0)$ is 
finite.
\item {\bf 2)} $\lim_{K\to\infty}f_+(K) =0 $ and 
$\lim_{K\to\infty}\sqrt{K}f_-(K)=0$.
\item {\bf 3)} $f_+$ and $f_-$ are continuous functions of $K$ for all 
$K\geq 0$.
\item {\bf 4)} The field $\frac{K}{(1-f_+)^2-Kf_-^2}$, is a continuous 
function of $K$ for all $K\geq 0$.
\end{description}
Condition {\bf 1)} is essentially the definition of the half-brane solution.
Condition {\bf 2)} ensures that the solution is not too ``identity-like,'' so
that it can have well-defined action and gauge invariant overlap. Conditions
{\bf 3)} and {\bf 4)} are motivated by a conjecture due to
Rastelli\cite{Rastelli} suggesting that the algebra of wedge states should be 
identified with the $C^*$-algebra of bounded, continuous functions on the 
positive real line\footnote{The definition of the algebra of wedge states is
not known, but discontinuous functions of $K$ appear to be problematic in
the level expansion. For related discussions, 
see \cite{Schnabl_lightning}.}. In particular, {\bf 3)} and {\bf 4)} 
assume that $f_+,f_-$ and $\frac{K}{(1-f_+)^2-Kf_-^2}$ must 
separately be well-defined states in order for the solution itself 
to be well-defined. Since these fields can be extracted directly from the 
solution via equation \eq{fpm}, this assumption appears necessary.

The reality condition implies that $f_+$ and $f_-$ are real 
functions of $K$. To see why this contradicts regularity, consider the 
denominator of the expression appearing in {\bf 4)}, which we will call $D(K)$:
\begin{equation}D(K) = (1-f_+)^2 - Kf_-^2.\end{equation}
By assumption {\bf 1)} we have
\begin{equation}D(0)=0\end{equation}
and
\begin{equation}D'(0) = -f_-(0)^2. \end{equation}
Since the slope is negative, we have
\begin{equation}D(K)<0\ \  \mathrm{for\ some\ positive}\ K.
\label{eq:pf1}\end{equation}
Now by assumption {\bf 2)}
\begin{equation}\lim_{K\to\infty}D(K) = 1.\label{eq:pf2}\end{equation}
Since $D$ is continuous by assumption {\bf 3)}, this means 
\begin{equation}D(K)=0\ \  \mathrm{for\ some\ strictly\ positive}\ K.
\end{equation}
See figure \ref{fig:exotic_real}. Since $D(K)$ has a zero, the ratio 
$K/D$ cannot be continuous for all $K\geq 0$ which violates assumption 
{\bf 4)}.
\end{proof}

\begin{figure}
\begin{center}
\resizebox{3in}{1.55in}{\includegraphics{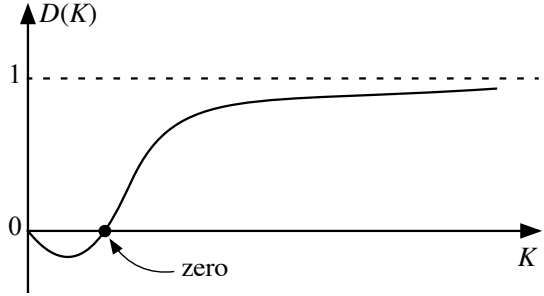}}
\end{center}
\caption{\label{fig:exotic_real}
If $f_\pm(K)$ are real, boundary conditions for the half-brane solution
at $K=0$ and $\infty$ require that the denominator of the solution
\eq{sol2} has a zero for positive $K$.}
\end{figure}

It is helpful to see why real $f_+$ and $f_-$ are problematic in specific 
examples. Suppose we defined the simple solution in \eq{simple} 
without the factor of $i$:
\begin{equation}f=f_++Gf_-=\frac{1}{1-G},\end{equation}
In this case condition {\bf 4)} is satisfied since
\begin{equation}\frac{K}{(1-f_+)^2-Kf_-^2}=K-1\end{equation}
is a continuous function of $K$. But condition {\bf 3)} is not satisfied: 
both $f_+$ and $f_-$ are equal to $\frac{1}{1-K}$, which has a pole at $K=1$. One could try to define 
$\frac{1}{1-K}$ using the Schwinger parameterization\cite{simple} 
\begin{equation}\frac{1}{1-K} = -\int_0^\infty dt\, e^t \Omega^t,\end{equation}
but since the wedge state $\Omega^t$ approaches a constant 
(the sliver) for large $t$, this integral diverges exponentially. A 
second example
is the Schnabl-like solution with $a=-i$, so that the factor of $i$ cancels
in \eq{Sch_f}. In this case
\begin{equation}f_+=f_-=\Omega\end{equation}
are both real and satisfy {\bf 3)}, but
\begin{equation}\frac{K}{(1-f_+)^2-Kf_-^2} = 
\frac{K}{(1-\Omega)^2 - K\Omega^2}\label{eq:ex2}\end{equation}
has a pole at $K\approx 0.931$ and violates {\bf 4)}. One can try to define 
this state by a geometric series
\begin{equation}\frac{K(1-\Omega)}{(1-\Omega)^2 - K\Omega^2} = 
K(1-\Omega)\left[\sum_{n=0}^\infty (2\Omega-(1-K)
\Omega^2)^n \right],\label{eq:geom_ex}\end{equation}
and evaluate the contribution of each term in the series 
to a typical state in the Fock space, for example $L_{-2}|0\rangle$. 
We have found numerically that the contributions to this coefficient
eventually grow exponentially with $n$. By contrast, 
contributions from the analogous sum at $a=1$ decay quite rapidly 
(as $1/n^4$) to give the coefficient $2.86\,L_{-2}|0\rangle$. 

\subsection{Topology of Half-Brane Solutions}
\label{subsec:Topology}

Though half-brane solutions are not real, every half-brane solution is 
related to its conjugate by a complex gauge transformation:
\begin{equation}\Psi^\ddag = U^{-1}(Q+\Psi)U.\label{eq:weak_real}
\end{equation}
The required $U$ is straightforward to compute (see appendix B of 
\cite{simple}) and is regular, in as far as the solutions themselves are 
regular. This raises an interesting issue. From the perspective of gauge 
invariant observables, a solution satisfying \eq{weak_real} is naively 
equivalent to a real solution. In fact, such solutions have been useful
for studying marginal deformations with singular OPEs\cite{FK,KO}, solutions 
in Berkovits' string field 
theory\cite{super_photon,super_marg,Okawa_supermarg1,Okawa_supermarg2,KOsuper},
and the tachyon vacuum \cite{simple}. 

A second thought, however, reveals that \eq{weak_real} is not quite
enough to guarantee the reality of observables. We must also require that $U$ 
can be implemented as a sequence of infinitesimal gauge transformations, 
that is, $U$ can be continuously deformed to the identity. Remarkably, for 
half-brane solutions, this appears not to be possible. That is, $U$ is a 
topologically nontrivial gauge transformation. This 
is the first explicit example of a topologically nontrivial 
gauge transformation in string field theory, and is especially interesting 
since the topology is not related to any spacetime geometry in the 
$\alpha'\to0$ limit, but appears to be intrinsic to the internal structure 
of the string.

To start we must define what it means to ``continuously deform'' the gauge 
transformation $U$. For simplicity, we will restrict ourselves to the $K,B,c,G$
subalgebra, though we presume that our results are more general.
We assume that a continuous deformation of the gauge transformation $U$
will effect a continuous deformation of half brane solutions, 
in the following sense:
\begin{definition}
(Continuity.) Let $\Psi(t),t\in[0,1]$ be a 1-parameter family of half-brane 
solutions in the $K,B,c,G$ subalgebra. We say that this family is 
{\bf continuous} only if $f_+(K,t)$ and $f_-(K,t)$, defined via \eq{fpm}, 
satisfy the following properties
\begin{description}
\item{\bf A1)-A4)} $f_+(K,t)$ and $f_-(K,t)$ satisfy conditions 
{\bf 1)}-{\bf 4)} for every fixed $t\in[0,1]$. 
\item{\bf B)} $f_+(K,t)$ and $f_-(K,t)$ are continuous functions of 
$K$ and $t$ for $K\geq 0$ and $t\in[0,1]$. 
\end{description}
\end{definition}
\noindent Conditions {\bf A1)-A4)} ensure that $\Psi(t)$ is a regular 
half-brane solution for all $t$. Condition {\bf B)} ensures that there are 
no ``jumps'' as we change $t$, that is, $f_+$ and $f_-$ should change 
continuously with $t$ if $\Psi(t)$ does. We now come to our central claim:
\begin{theorem}
There is no continuous 1-parameter family of half-brane solutions 
$\Psi(t),t\in[0,1]$ in the $K,B,c,G$ subalgebra such that 
$\Psi(0)=\Psi(1)^\ddag$. 
\end{theorem}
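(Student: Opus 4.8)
The plan is to attach to every regular half-brane solution in the $K,B,c,G$ subalgebra a homotopy invariant $N[\Psi]\in\mathbb{Z}+\tfrac12$ which is constant along continuous families in the sense of the preceding Definition, but which flips sign under $\ddag$. Since a half-odd-integer is never $0$, this immediately rules out a continuous family with $\Psi(0)=\Psi(1)^\ddag$.

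To construct $N$, I would start from the denominator appearing in the solution \eq{sol2} and factor it as a difference of squares: $D(K)=(1-f_+)^2-Kf_-^2=D_+(K)D_-(K)$ with $D_\pm(K)=1-f_+(K)\mp\sqrt{K}f_-(K)$. Condition {\bf 4)} forces $D$, hence each of $D_+,D_-$, to be nowhere zero on $(0,\infty)$, so $h(K):=D_-(K)/D_+(K)$ is a continuous map $(0,\infty)\to\mathbb{C}\setminus\{0\}$. Using condition {\bf 1)} one has $D_\pm(K)=\mp\sqrt{K}f_-(0)+O(K)$ near $K=0$, so $h(K)\to-1$; using condition {\bf 2)} one has $D_\pm(K)\to1$, so $h(K)\to1$, as $K\to\infty$. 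Thus $h$ is a path in $\mathbb{C}\setminus\{0\}$ joining the antipodal points $-1$ and $1$, and I define $N[\Psi]:=\tfrac{1}{2\pi}\Delta\arg h$ to be the net change of its argument (normalized so that $\arg h(0^+)=\pi$). Because the endpoints are antipodal, $N[\Psi]\in\mathbb{Z}+\tfrac12$; in particular $N[\Psi]\neq0$.

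Two facts then complete the argument. First, reality conjugation acts on the data by $f_\pm\mapsto\overline{f_\pm}$ --- this is exactly the computation behind the proof of the previous theorem, using that $K$ and $G$ are real --- whence $D_\pm\mapsto\overline{D_\pm}$, $h\mapsto\overline{h}$, and therefore $N[\Psi^\ddag]=-N[\Psi]$. Second, for a continuous family $\Psi(t)$, conditions {\bf A1)}--{\bf A4)} keep $h(\cdot,t)$ a $(-1)$-to-$1$ path in $\mathbb{C}\setminus\{0\}$ for each $t$, while condition {\bf B)} makes $h(K,t)$ jointly continuous; hence $t\mapsto N[\Psi(t)]$ is a continuous function valued in the discrete set $\mathbb{Z}+\tfrac12$, so it is constant. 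Combining, a continuous family with $\Psi(0)=\Psi(1)^\ddag$ would give $N[\Psi(1)]=N[\Psi(0)]=N[\Psi(1)^\ddag]=-N[\Psi(1)]$, forcing $N[\Psi(1)]=0$, a contradiction.

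The step I expect to be the main obstacle is the continuity of $t\mapsto N[\Psi(t)]$, i.e. ruling out winding number ``leaking in'' through the endpoints $K=0,\infty$ as $t$ varies. On any compact $K$-interval $[\epsilon,M]$ this is automatic from joint continuity and nonvanishing of $D_\pm$ on a compact set, so the real content is uniform control near $K=0$ (one needs $1-f_+(K,t)=O(K)$ uniformly in $t$, so that $\sqrt{K}f_-(0,t)$ genuinely dominates) and near $K=\infty$ (one needs the limits in {\bf 2)} to be approached uniformly in $t$). I would argue that these mild strengthenings of {\bf 1)}--{\bf 2)} are automatic once $f_\pm(K,t)$ and $K/D(K,t)$ are read as jointly continuous functions on a compactification of $[0,\infty)$ in $K$, which is the natural reading of the $C^*$-algebra picture already invoked for conditions {\bf 3)}--{\bf 4)}, and then spell out the resulting estimate. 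As a check, for the simple solution $\Psimp$ one has $f_+=(1+K)^{-1}$, $f_-=i(1+K)^{-1}$, so $h(K)=(\sqrt{K}+i)/(\sqrt{K}-i)$ traverses the upper unit semicircle and $N[\Psimp]=-\tfrac12$, with $N[\Psimp^\ddag]=+\tfrac12$ --- consistent with the theorem.
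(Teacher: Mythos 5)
Your proposal is correct and reaches the theorem by a genuinely different route than the paper. The paper works directly on the two-dimensional strip $\Sigma=\mathbb{R}_+\times[0,1]$: it studies $\Theta=D/|D|$ on $\delta\Sigma$, splits the boundary into the $K=0$ edge and the two $t$-edges joined through the point $K=\infty$, uses the conjugation boundary condition $f_\pm(K,0)=f_\pm(K,1)^*$ to show that the total winding number of $\Theta|_{\delta\Sigma}$ is odd, and concludes that $D$ must vanish at an interior point $(K,t)$, violating condition {\bf A4)} at some intermediate $t$. You instead factor $D=D_+D_-$ over the real half-line using $\sqrt{K}$ and attach to each individual solution the half-odd-integer $N[\Psi]=\tfrac{1}{2\pi}\Delta\arg(D_-/D_+)$, whose half-integrality (forced by the antipodal limits $-1$ at $K=0^+$ from condition {\bf 1)} and $+1$ at $K=\infty$ from condition {\bf 2)}) is exactly the structural source of the paper's ``odd winding'' parity; the sign flip under $\ddag$ (both proofs rest on the same fact, that conjugation acts as $f_\pm\mapsto f_\pm^*$) together with local constancy then gives the contradiction. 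Your packaging is more modular and yields an explicit invariant computable from $(f_+,f_-)$ labelling topological sectors---precisely the kind of characterization the discussion at the end of section \ref{subsec:Topology} says would be desirable (whether $N$ itself or only its sign distinguishes the two realized sectors is a separate question); the paper's version avoids introducing $\sqrt{K}$ and localizes the obstruction as a concrete zero of $D$ inside the strip. The uniformity issue you flag (winding ``leaking'' through $K=0$ or $K=\infty$ as $t$ varies) is genuine, but it sits at exactly the same level of rigor in the paper's own argument, which implicitly treats $\Theta$ as continuous at the single boundary point $K=\infty$ and extends $D/|D|$ up to the $K=0$ edge where $D$ vanishes; your proposal to read conditions {\bf 1)}--{\bf 4)} and {\bf B)} on a compactification of $[0,\infty)$ just makes that implicit assumption explicit. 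Your check on $\Psimp$, giving $h=(\sqrt{K}+i)/(\sqrt{K}-i)$ and $N=-\tfrac12$, is correct.
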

\noindent This means, in particular, that the gauge transformation $U$ relating
a half-brane solution to its conjugate cannot be continuously deformed to 
the identity. 
\begin{proof} Since $\Psi(0)=\Psi(1)^\ddag$, the family of states 
$f_+(K,t)$ and $f_-(K,t)$ associated with $\Psi(t)$ must satisfy the 
boundary condition, 
\begin{equation}f_+(K,0)=f_+(K,1)^*,\ \ \ \ f_-(K,0)=f_-(K,1)^*.
\label{eq:fpmconj}\end{equation}
Analogous to the proof of Theorem 1, we will show that this boundary 
condition is incompatible with the continuity conditions stated above. In 
particular, we will show that the boundary condition, together with 
conditions {\bf A1)-A3)} and {\bf B)} imply that the field 
\begin{equation}D=(1-f_+)^2-Kf_-^2\end{equation} 
has a zero at some point $(K,t)$. Therefore condition {\bf A4)} is 
violated, and the sought after continuous family of solutions does not exist. 

It is useful to think of $K$ and $t$ as coordinates on a semi-infinite strip
\begin{equation}\Sigma = \mathbb{R}_+\otimes[0,1],\end{equation}
Consider the function
\begin{equation}\left.\Theta\right|_{\delta\Sigma}
=\left.\frac{D}{|D|}\right|_{\delta\Sigma},
\end{equation}
which maps the boundary of $\Sigma$ into complex numbers of unit modulus. 
The boundary includes the point at $K=\infty$, so that $\delta\Sigma$ has 
the topology of a circle. We make the following claims:

\begin{claim}$\Theta|_{\delta\Sigma}$ is a continuous map from $\delta\Sigma$
into complex numbers of unit modulus.
\end{claim}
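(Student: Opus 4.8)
The plan is to verify well-definedness and continuity of $\Theta$ face by face, using the decomposition of the boundary circle $\delta\Sigma$ into the two horizontal edges $t=0$ and $t=1$ (each with $K\in[0,\infty)$), the vertical edge $K=0$ (with $t\in[0,1]$), and the point at $K=\infty$ that joins the two horizontal edges. On the horizontal edges (for $K>0$) and near $K=\infty$ the map $\Theta=D/|D|$ is given literally and one must show $D\neq 0$ there; on the vertical edge $D$ vanishes identically, so $\Theta$ must be defined by a limit whose existence and continuity must be established; finally one checks that the assignments agree at the three points $(0,0)$, $(0,1)$ and $K=\infty$ where the faces meet, so that they glue into a continuous map $\delta\Sigma\to S^1$.

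On the interiors of the horizontal edges, i.e.\ for $t\in\{0,1\}$ and $K>0$, the function $D$ cannot vanish: a zero of $D$ at some $K_0>0$ would force a pole of $K/D$ at $K_0$, contradicting \textbf{A4)} (equivalently, were $D$ to vanish there we would already have exhibited the desired zero of $D$ on $\Sigma$ and be done). Since $f_+,f_-$ and hence $D$ are continuous by \textbf{A3)} and \textbf{B)}, $\Theta$ is continuous on these edges. Near $K=\infty$, condition \textbf{A2)} gives $f_+\to 0$ and $\sqrt{K}f_-\to 0$, so $D=(1-f_+)^2-(\sqrt{K}f_-)^2\to 1$; in particular $D\neq 0$ for $K$ large, and we set $\Theta=1$ at the point $K=\infty$.

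The crux of the claim is the vertical edge $K=0$, where \textbf{A1)} forces $f_+(0,t)=1$, hence $D(0,t)=0$ for every $t$, so $\Theta$ can only be defined there by a limit. Using \textbf{A1)} ($f_+(0,t)=1$ with $f_+'(0,t)$ finite) one has $1-f_+(K,t)=O(K)$, hence $(1-f_+)^2=O(K^2)$; continuity of $f_-$ at $K=0$ (part of \textbf{A3)}), with $f_-(0,t)\neq 0$, gives $Kf_-(K,t)^2=f_-(0,t)^2K+o(K)$; together these yield $D(K,t)=-f_-(0,t)^2K+o(K)$ as $K\to 0^+$, so that
\begin{equation}\lim_{K\to 0^+}\frac{D(K,t)}{|D(K,t)|}=-\frac{f_-(0,t)^2}{|f_-(0,t)|^2},\end{equation}
which I take as the value of $\Theta$ on the edge $K=0$. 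It is a well-defined point of the unit circle because $f_-(0,t)\neq 0$ by \textbf{A1)}, and it varies continuously in $t$ because $f_-(K,t)$ is jointly continuous by \textbf{B)} (so $t\mapsto f_-(0,t)$ is continuous) and nowhere zero on the edge by \textbf{A1)}.

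Finally, the asymptotics just obtained show that $\Theta$ along the bottom edge tends to $-f_-(0,0)^2/|f_-(0,0)|^2$ as $K\to 0^+$, which is exactly the value assigned on the left edge at $(0,0)$, and likewise at $(0,1)$; and $D\to 1$ along both horizontal edges as $K\to\infty$, matching $\Theta=1$ at the point at infinity. Hence the pieces glue and $\Theta|_{\delta\Sigma}$ is a continuous map from the circle $\delta\Sigma$ into the unit circle. The main obstacle is precisely the treatment of the edge $K=0$: there $D$ vanishes along the whole edge, and the claim amounts to the fact that \textbf{A1)} and \textbf{A3)} together force $D$ to vanish there to exactly first order in $K$ with a $t$-continuous, nowhere-vanishing coefficient $-f_-(0,t)^2$, so that the limiting direction of $D$ defines $\Theta$ unambiguously and continuously on that edge; the remaining verifications (non-vanishing of $D$ on the horizontal edges from \textbf{A4)}, the limit at $K=\infty$ from \textbf{A2)}, and the matching at the corners) are routine.
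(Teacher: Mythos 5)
Your proof is correct and follows essentially the same route as the paper's (much terser) argument: $D$ is zero-free on the $t=0,1$ edges because those solutions are assumed well-defined (condition \textbf{A4}), $D\to 1$ at $K=\infty$ by \textbf{A2}, and \textbf{A1}, \textbf{A3}, \textbf{B} give continuity elsewhere. Your explicit limiting definition $\Theta|_{K=0}=-f_-(0,t)^2/|f_-(0,t)|^2$ is exactly what the paper uses implicitly when it later writes $\Theta|_{I_1}=e^{i(2\theta+\pi)}$, so your extra care at the $K=0$ edge is a welcome clarification rather than a deviation.
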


\begin{proof} By assumption we take the half-brane solution and its conjugate
at $t=0$ and $t=1$ to be well-defined solutions. Therefore, $D$ cannot have 
any zeros for positive $K$ at $t=0$ and $t=1$. Conditions 
{\bf A1)}, {\bf A2)}, {\bf B)} then imply continuity on all of $\delta\Sigma$.
\end{proof}
 
\begin{claim} If $\Theta|_{\delta\Sigma}$ has nonzero winding number, 
then $D$ has a zero inside $\Sigma$.
\end{claim}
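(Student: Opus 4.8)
The plan is to prove the contrapositive: assuming $D$ has no zero anywhere in $\Sigma$ (it already has none on $\delta\Sigma$, by Claim 1), I will show that $\Theta|_{\delta\Sigma}$ has winding number zero. The argument is purely topological and uses no special feature of $D$ beyond continuity and non-vanishing: $\Sigma$ is contractible (in fact convex), so any continuous map $\Sigma\to\mathbb{C}\setminus\{0\}$ is null-homotopic and therefore restricts to a winding-number-zero map on any boundary circle. The only real work is that the boundary circle $\delta\Sigma$ includes the point at $K=\infty$, which is not literally in $\Sigma$, so I must control $D$ as $K\to\infty$.

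The concrete route I would take is truncation. For each $N>0$ let $\Sigma_N=[0,N]\otimes[0,1]$, a compact rectangle whose boundary $\partial\Sigma_N$ is an honest circle of four sides. Since $\Sigma_N$ is convex and $D|_{\Sigma_N}$ is continuous and nowhere zero, $\Theta=D/|D|$ is defined and continuous on all of $\Sigma_N$, so $\Theta|_{\partial\Sigma_N}$ extends over the disk $\Sigma_N$ and hence has winding number zero. Now let $N\to\infty$: the three sides $K=0$, $t=0$, $t=1$ of $\partial\Sigma_N$ converge to the corresponding sides of $\delta\Sigma$, while the remaining side $\{N\}\otimes[0,1]$ must be shown to collapse to the point at infinity. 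By assumption {\bf A2)}, $D(K,t)=(1-f_+)^2-Kf_-^2\to 1$ as $K\to\infty$ for each fixed $t$, and together with the joint continuity {\bf B)} this limit is uniform on the compact interval $t\in[0,1]$; equivalently $D$ extends continuously to the one-point compactification of $\Sigma$ at the $K=\infty$ end—which is exactly the compactification that turns $\delta\Sigma$ into a circle—taking the value $1$ there. Hence for $N$ large, $\Theta$ restricted to $\{N\}\otimes[0,1]$ is uniformly close to $1$, so that arc lies in a half-plane about $1$ and can be contracted rel endpoints to the constant path at $1$ without meeting $0$; performing this contraction identifies $\Theta|_{\partial\Sigma_N}$ with $\Theta|_{\delta\Sigma}$ up to homotopy. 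Therefore $\mathrm{wind}(\Theta|_{\delta\Sigma})=\mathrm{wind}(\Theta|_{\partial\Sigma_N})=0$, which is the desired contradiction with a nonzero winding number.

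An equivalent and perhaps cleaner packaging: the one-point compactification $\bar\Sigma=\Sigma\cup\{\infty\}$ is homeomorphic to a closed disk with $\partial\bar\Sigma=\delta\Sigma$; by {\bf A2)} and {\bf B)}, $D$ extends continuously over $\bar\Sigma$ with $D(\infty)=1$, and if $D$ had no zero on $\bar\Sigma$ then $\Theta=D/|D|\colon\bar\Sigma\to S^1$ would be a continuous map from a disk to a circle, necessarily null-homotopic, forcing $\mathrm{wind}(\Theta|_{\delta\Sigma})=0$. Since $D\neq0$ on $\delta\Sigma$ by Claim 1, a nonzero winding number then forces a zero of $D$ in the interior of $\bar\Sigma$, i.e. inside $\Sigma$.

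The step I expect to be the main obstacle is precisely the behaviour at $K=\infty$: one must be certain that {\bf A2)} (with {\bf B)}) makes $D$ extend continuously and with non-zero value across the point at infinity uniformly in $t$, so that the winding number of $\Theta|_{\delta\Sigma}$ is well defined and is the limit of the winding numbers over the truncated rectangles $\partial\Sigma_N$. This is the same regularity input already used in the proof of Claim 1, so it is legitimate to invoke it here; granting it, the remainder is the standard fact that a map into $\mathbb{C}\setminus\{0\}$ defined on a contractible set has vanishing winding number on its boundary, and no computation is required.
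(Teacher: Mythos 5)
Your proof is correct and is essentially the paper's own argument: the paper also proves the contrapositive, extending $\Theta=D/|D|$ continuously over the whole strip when $D$ has no zero and concluding that $\Theta|_{\delta\Sigma}$ is null-homotopic and hence has zero winding number. Your only addition is a more careful treatment of the point at $K=\infty$ (truncation/one-point compactification), which the paper handles implicitly via the same regularity input already invoked in its Claim 1.
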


\begin{proof} Suppose $D$ has no zeros in $\Sigma$. Since {\bf B)} implies 
that $D$ is continuous, we can extend 
$\Theta|_{\delta\Sigma}$ to a continuous function on the entire semi-infinite
strip by simply taking $\Theta=\frac{D}{|D|}$. Shrinking $\delta\Sigma$ 
to a point, this function gives a continuous homotopy from 
$\Theta|_{\delta\Sigma}$ to the identity map. Since the identity map has zero
winding number, the result follows.
\end{proof}

\begin{claim} The winding number of $\Theta|_{\delta\Sigma}$ is odd.\end{claim}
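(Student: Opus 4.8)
The plan is to compute the winding number of $\Theta|_{\delta\Sigma}$ explicitly by breaking the boundary circle into its four natural arcs---the bottom edge $t=0$, the right edge $K=\infty$, the top edge $t=1$, and the left edge $K=0$ (the orientation is irrelevant since reversing it only flips the sign of the winding)---and adding up the change of $\arg\Theta$ along each, keeping track of how the continuous branch chosen on one arc fixes the branch on the next. On the right edge the boundary conditions \textbf{A2)} force $f_+\to 0$ and $Kf_-^2\to 0$, so $D\to 1$ and $\Theta\equiv 1$ there; this arc contributes nothing, and $\Theta$ takes the value $1$ at both corners $K=\infty$. Continuity of $\Theta$ along the whole of $\delta\Sigma$ is exactly Claim 1, which we may invoke.

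On the left edge $K=0$ the denominator vanishes, $D(0,t)=(1-f_+(0,t))^2=0$ by \textbf{A1)}, so I would use the leading behavior $D(K,t)=-f_-(0,t)^2\,K+O(K^2)$, valid because \textbf{A1)} guarantees $f_+'(0,t)$ finite and $f_-(0,t)\neq 0$. Hence $\Theta$ extends continuously to $K=0$ with
\begin{equation}
\Theta(0,t)=\frac{-f_-(0,t)^2}{|f_-(0,t)^2|}=-e^{2i\arg f_-(0,t)}.
\end{equation}
Writing $\psi(t)=\arg f_-(0,t)$, which by \textbf{B)} and \textbf{A1)} is a continuous real function on $[0,1]$, the change of $\arg\Theta$ along this edge is $2(\psi(0)-\psi(1))$. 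The conjugation boundary condition \eq{fpmconj} gives $f_-(0,0)=f_-(0,1)^*$, hence $e^{i\psi(0)}=e^{-i\psi(1)}$, i.e.\ $\psi(0)+\psi(1)\in 2\pi\mathbb{Z}$.

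The crucial input for the remaining two edges is that \eq{fpmconj} together with the reality of $K\geq 0$ forces the pointwise identity $D(K,1)=\overline{D(K,0)}$, so that $\Theta|_{t=1}$ is the complex conjugate of $\Theta|_{t=0}$. Therefore the change of $\arg\Theta$ along the top edge equals that along the bottom edge; call the common value $\Delta$. Since $\Theta=1$ at $K=\infty$ and $\Theta(0^+,0)=-e^{2i\psi(0)}$, one has $\Delta\in-\pi-2\psi(0)+2\pi\mathbb{Z}$. Adding the four contributions $\Delta$ (bottom) $+\,0$ (right) $+\,\Delta$ (top) $+\,2(\psi(0)-\psi(1))$ (left) and dividing by $2\pi$, all genuine $\psi$-dependence collapses and one finds
\begin{equation}
\mathrm{winding}\bigl(\Theta|_{\delta\Sigma}\bigr)\equiv -1-\frac{\psi(0)+\psi(1)}{\pi}\ \ (\mathrm{mod}\ 2),
\end{equation}
the irreducible $-1$ being nothing but the phase $e^{i\pi}$ of the slope $D'(0,t)=-f_-(0,t)^2$. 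Since $\psi(0)+\psi(1)\in 2\pi\mathbb{Z}$, the second term is an even integer, and the winding number is odd.

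I expect the main obstacle to be the careful bookkeeping at the four corners---pinning down how the branch of $\arg\Theta$ fixed on one edge determines it on the next, and isolating exactly which pieces are ambiguous only up to multiples of $4\pi$ (and hence drop out mod $2$). Everything else is either a boundary condition we are allowed to assume, the elementary conjugation identity $D(K,1)=\overline{D(K,0)}$, or the local expansion of $D$ near $K=0$. As a consistency check, interpolating $f_+(K,t)=1/(1+K)$ and $f_-(K,t)=ie^{-i\pi t}/(1+K)$ between the simple half-brane solution $\Psimp$ and its conjugate gives $D(K,t)=K\bigl(K+e^{-2\pi i t}\bigr)/(1+K)^2$, whose phase restricted to $\delta\Sigma$ winds exactly once---and which indeed vanishes at the interior point $(K,t)=(1,\tfrac12)$, as forced by Claims 2 and 3.
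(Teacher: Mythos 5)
Your proof is correct, and it rests on exactly the same inputs as the paper's: the leading behavior $D=-f_-(0,t)^2K+o(K)$ at the $K=0$ edge (so $\Theta$ extends there as $-f_-^2/|f_-|^2$), the conjugation relation \eq{fpmconj}, which makes $\Theta$ on the $t=1$ edge the complex conjugate of its value on the $t=0$ edge, and $\Theta=1$ at $K=\infty$. The difference is organizational. The paper splits $\delta\Sigma$ into two segments, $I_1$ (the $K=0$ edge) and $I_2$ (the $t=0,1$ edges joined through $K=\infty$), introduces an integer ``winding number of an interval map'' via the lift and remainder $R$ in \eq{open_winding}, and obtains oddness from \eq{seg_add} after the case analysis $0\leq R<\pi$ versus $\pi\leq R<2\pi$; you instead sum the net change of $\arg\Theta$ over four arcs and reduce mod $2$ from the outset, so the interval-winding gadget and the casework on $R$ never appear. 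Your observation that the top and bottom edges contribute equal argument changes (conjugation combined with the reversed traversal in the boundary circuit) is the analogue of the paper's relation $\psi(1)-\psi(0)=2(\psi(1)-\psi(\frac{1}{2}))$ for its lift of $\Theta|_{I_2}$, and your corner relation $\psi(0)+\psi(1)\in 2\pi\mathbb{Z}$ for $\psi(t)=\arg f_-(0,t)$ is the paper's $\theta(1)=-\theta(0)+2\pi k$. Two things your version adds: you make explicit the continuous extension of $\Theta=D/|D|$ to the $K=0$ edge, where $D$ vanishes identically and the paper leaves the limiting definition implicit, and the interpolation between $\Psimp$ and its conjugate is a genuine consistency check. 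What the paper's heavier bookkeeping buys is a well-defined integer winding for each boundary segment separately, which is closer in spirit to extracting an invariant that could label the two topological sectors, whereas your computation yields only the parity---which is all the claim requires. The one point worth stating explicitly in your write-up is the orientation convention: ``top equals bottom'' holds when the top edge is traversed with decreasing $K$ as part of the boundary circuit, which is indeed how your final sum uses it.
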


\begin{proof} The proof of this claim is the most technical part of the 
argument. As a first step, it is helpful to introduce a notion of ``winding 
number'' for maps from a closed interval into complex numbers of 
unit modulus. Let $g$ be a continuous map from a 
closed oriented interval $I$ into complex numbers of unit modulus. 
We can lift $g$ to a continuous map $\phi:I\to\mathbb{R}$ such that 
$g=e^{i\phi}$. Parameterizing $I$ by 
$\lambda\in[0,1]$, we define the {\it winding number} of $g$ to be the 
unique integer $n$ such that
\begin{equation}\phi(1)-\phi(0)=2\pi n +R,\ \ \ \ \ 0\leq R<2\pi. 
\label{eq:open_winding}\end{equation}
We will write $n=w[g]$. 

Now consider two closed oriented intervals $I_1$ and $I_2$ which
intersect at their endpoints to form a circle $S^1$. Assume that the 
orientation of $I_1$ is the same as that of the circle, and the 
orientation of $I_2$ is opposite. If $g$ is a continuous map from $S^1$ 
into complex numbers of unit modulus, then
\begin{equation}w[g] = w\left[g|_{I_1}\right]-w\left[g|_{I_2}\right],
\label{eq:seg_add}\end{equation}
where $g|_{I_1}$ and $g|_{I_2}$ is the restriction of $g$ to the
intervals $I_1$ and $I_2$, respectively. The proof is 
straightforward.

\begin{figure}
\begin{center}
\resizebox{5in}{1.2in}{\includegraphics{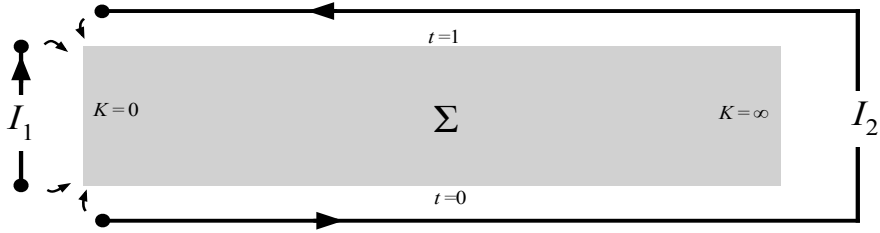}}
\end{center}
\caption{\label{fig:Sigma} The boundary segments $I_1$ and $I_2$ of $\Sigma$.
}
\end{figure}

We compute the winding number of $\Theta|_{\delta\Sigma}$ by
splitting $\delta\Sigma$ into two segments, computing the winding numbers on 
each segment separately, and taking the difference following \eq{seg_add}. 
The segments will be: 
\begin{eqnarray}
\lineup I_1:\ \mathrm{the}\ K=0\ \mathrm{boundary\ of}\ \Sigma,\nonumber\\
\lineup I_2:\ \mathrm{the}\ t=0\ \mathrm{and}\ t=1\ \mathrm{boundaries\ of}\
\Sigma,\ \mathrm{connected\ through}\ K=\infty.\nonumber
\end{eqnarray}
See figure \ref{fig:Sigma}. First we compute the winding number of 
$\Theta|_{I_1}$ in terms of the winding number of the function
\begin{equation}\left.\frac{f_-}{|f_-|}\right|_{I_1}=e^{i\theta},
\label{eq:fmodfm}\end{equation}
Conditions {\bf A1)} and {\bf B)} implies that $\theta$ is a continuous map 
from $I_1$ into $\mathbb{R}$. The winding number of \eq{fmodfm} is the 
integer $n$ satisfying
\begin{equation}\theta(1)-\theta(0)=2\pi n +R\ \ \ \ 0\leq R<2\pi.
\label{eq:FmW}\end{equation}
From {\bf A1)} we also have 
\begin{equation}\Theta|_{I_1}=e^{i(2\theta+\pi)}.
\label{eq:DmodD_I1}\end{equation}
The winding number of $\Theta|_{I_1}$ follows:
\begin{equation}(2\theta(1)+\pi)-(2\theta(0)+\pi)=4\pi n+2R.\end{equation}
If $0\leq R<\pi$, the winding number is $2n$, and if $\pi\leq R<2\pi$
the winding number is $2n+1$. Thus
\begin{eqnarray}w[\Theta|_{I_1}]\lineup 
\in 2\mathbb{Z}\ \ \ \ \ \ \ \ \ \ \ \mathrm{if}\ \ 0\leq R<\pi,\nonumber\\
w[\Theta|_{I_1}]\lineup \in 2\mathbb{Z}+1
\ \ \ \ \ \ \mathrm{if}\ \ \pi\leq R<2\pi.
\end{eqnarray}
Now compute the winding number on $I_2$. Parameterize $I_2$
by $\lambda\in[0,1]$ such that: 1) $\lambda=0$ corresponds to the 
corner of the strip where $K$ and $t$ both vanish; 2) 
$\lambda=\frac{1}{2}$ corresponds the the point at $K=\infty$, and 
3) $\lambda=1$ corresponds to the corner where $K$ vanishes and $t=1$. 
Also write
\begin{equation}\Theta|_{I_2}=e^{i\psi},
\end{equation}
where $\psi$ is a continuous map from $I_2$ into $\mathbb{R}$.
Because of \eq{fpmconj}, $\Theta|_{I_2}$ evaluated on the $t=1$ boundary 
is the complex conjugate of its value on the $t=0$ boundary, which implies 
\begin{equation}\psi(1)-\psi(0) = 2(\psi(1)-\psi(\half)).\end{equation}
Continuity requires that $\Theta=1$ at $K=\infty$, and therefore 
$\psi(\frac{1}{2})=2\pi m$ for some integer $m$:
\begin{equation}\psi(1)-\psi(0) = 2\psi(1)-4\pi m.\end{equation}
Comparing with equation \eq{DmodD_I1} we are free to assume 
\begin{equation}\psi(1)=2\theta(1)+\pi\end{equation} 
Also \eq{fpmconj} implies
\begin{equation}\theta(1)=-\theta(0)+2\pi k,\end{equation}
for some integer $k$. Therefore
\begin{eqnarray}\psi(1)-\psi(0) \lineup = 4\theta(1)+2\pi(-2m+1)\nonumber\\
\lineup = 2(\theta(1)-\theta(0))+2\pi(2k-2m+1).
\end{eqnarray}
Now we substitute equation \eq{FmW} we find
\begin{equation}\psi(1)-\psi(0) = 2\pi(2n+2k-2m+1)+2R.\end{equation}
Therefore
\begin{eqnarray}w[\Theta|_{I_2}]\lineup 
\in 2\mathbb{Z}+1\ \ \ \ \ \ \ \mathrm{if}\ \ 0\leq R<\pi,\nonumber\\
w[\Theta|_{I_2}]\lineup \in 2\mathbb{Z}
\ \ \ \ \ \ \ \ \ \ \ \ \mathrm{if}\ \ \pi\leq R<2\pi.
\end{eqnarray}
Now we invoke \eq{seg_add} to find the final result:
\begin{equation}w[\Theta|_{\delta \Sigma}]
=w[\Theta|_{I_1}]-w[\Theta|_{I_2}]\in 2\mathbb{Z}+1.\end{equation}
\end{proof}
\noindent Since the winding number of $\Theta|_{\delta \Sigma}$ is odd, it 
cannot be zero. Therefore $D$ must vanish at some point inside the 
semi-infinite strip. This completes the proof of Theorem 2.
\end{proof}

Theorem 2 implies that half-brane solutions come in at least two 
topologically distinct sectors in the $K,B,c,G$ subalgebra. With some
extra work, one can show that there are precisely two. It would be nice
to find a way to characterize these sectors in terms of an invariant which is
computable in terms of $f_+$ and $f_-$. A related question is that, though
we have claimed that every half-brane solution is physically distinguishable 
from its conjugate on account of a topologically nontrivial gauge 
transformation, we have not found an observable which would actually 
distinguish between a half-brane solution and its conjugate in practice. 
The energy and closed string overlap, computed in section 
\ref{sec:observables}, are real observables for all 
half-brane solutions. Finding an observable which can detect the failure of 
the reality condition would give much insight into the physical significance of
half-brane solutions, as well as their topological structure.

\section{Regularization and Phantom Piece}
\label{sec:phantom}

In this section we discuss the regularization and phantom term for the 
half-brane solution. For clarity we focus on the Schnabl-like half-brane 
solution, though the discussion can be extended to solutions based on more
general choices of $f_\pm$.

The Schnabl-like solution \eq{Schnabl} can be written in the form
\begin{equation}
\Psch = \left[c\frac{KB}{1-(1+iaG)\Omega}c
+B\gamma^2\right](1+ia G)\Omega.\label{eq:Schnabl2}\end{equation}
It is useful\cite{Erler} to replace the factor between the $c$ insertions 
by the partial sum of a geometric series, with the appropriate ``error term'':
\begin{equation}\frac{K}{1-(1+iaG)\Omega} = \sum_{n=0}^N K[(1+iaG)\Omega]^n
+\frac{K}{1-(1+iaG)\Omega}[(1+iaG)\Omega]^{N+1}.\end{equation}
With this substitution, the solution can be written in the form 
\begin{equation}\Psch = \Psi_{N+1}-\sum_{n=0}^N\psi_n'+\Gamma,
\label{eq:sum_phantom}\end{equation}
where, reflecting the notation of Schnabl\cite{Schnabl}, we have defined
the fields 
\begin{eqnarray}
\psi_n' \lineup = -cKB\Big[(1+iaG)\Omega\Big]^n c\Big[(1+iaG)\Omega
\Big]\\
\Gamma \lineup = B\gamma^2\Big[(1+iaG)\Omega\Big] \\
\Psi_{N+1}\lineup 
= c\left(\frac{KB}{1-(1+iaG)\Omega}\Big[(1+iaG)\Omega\Big]^{N+1}
\right)c\Big[(1+iaG)\Omega\Big]\label{eq:phantom}
\end{eqnarray}
Since we have just made a trivial substitution, \eq{sum_phantom} is equal
to the Schnabl-like solution for all $N$. However this expression is most
useful in the $N\to\infty$ limit. In this limit $\Psi_{N+1}$ becomes the 
so-called ``phantom term,'' and vanishes in the Fock space. 
Compared with phantom terms for the tachyon vacuum 
solution\cite{Erler,Schnabl,SSF2}, the large $N$ limit of $\Psi_{N+1}$ is 
novel and requires careful treatment.

To understand the phantom term we should study the large $N$ behavior of 
the string field 
\begin{equation} \Big[(1+iaG)\Omega\Big]^N. \end{equation}
It is helpful to decompose this into GSO($\pm$) components as follows: 
\begin{equation}\Big[(1+iaG)\Omega\Big]^N= X_N + (ia N G) Y_N,\label{eq:XY}
\end{equation}
where
\begin{eqnarray}
X_N \lineup = 
\frac{1}{2}\Big[(1+ia\sqrt{K})^N+(1-ia\sqrt{K})^N\Big]\Omega^N,\nonumber\\
Y_N\lineup = \frac{1}{2iaN\sqrt{K}}
\Big[(1+ia\sqrt{K})^N-(1-ia\sqrt{K})^N\Big]
\Omega^N .\label{eq:XYnosum}\end{eqnarray}
Above we introduced $\sqrt{K}$ formally in order to write closed form 
expressions for the sums:
\begin{eqnarray}
X_N \lineup = 
\sum_{0\leq k\leq N/2}{N \choose 2k}a^{2k}(-K)^k
\Omega^N,\nonumber\\
Y_N\lineup = \frac{1}{N}\sum_{0\leq k\leq \frac{N-1}{2}}{N \choose 2k+1}
a^{2k}(-K)^k \Omega^N. \label{eq:ph_sums}\end{eqnarray}
A naive argument suggests that the large $N$ limit of $X_N$ and $Y_N$ should be 
divergent. Note that while $(-K)^k\Omega^N$ vanishes as a power 
$\frac{1}{N^{2+k}}$ in the Fock space, the binomial coefficients grow very
rapidly, so for fixed $k$ and large $N$ a term in the sum for $X_N$ diverges 
as a power
\begin{equation}{N\choose 2k}(-K)^k\Omega^N\sim N^{k-2},
\end{equation}
Generically a sum of such terms would diverge faster than any 
power of $N$ in the Fock space. 

Miraculously, however, for a certain range of the parameter $a$ 
$X_N$ and $Y_N$ converge to the sliver state: 
\begin{equation}\lim_{N\to\infty}X_N =\lim_{N\to\infty}Y_N 
=\Omega^\infty.\label{eq:fNpm}
\end{equation}
To see how this happens, consider the Fock space expansion for the wedge 
state $\Omega^\alpha$. We can write the expansion in the form
\begin{equation}|\Omega^\alpha\rangle = 
\sum_{\vec{n}}P_{\vec{n}}\left(\frac{1}{\alpha+1}\right)\,L_{-n_q}...
L_{-n_2}L_{-n_1}|0\rangle,
\label{eq:fock_wedge}\end{equation}
where $\vec{n}=(n_q,...n_2,n_1)$ is a list of integers of arbitrary length
satisfying
\begin{equation}n_q\geq... \geq n_2\geq n_1\geq 2,\end{equation}
and $P_{\vec{n}}(x)$ are a collection of polynomials in $x$ which determine the
coefficients of $|0\rangle$ and its descendents. For example, up to level 
$4$ the nonvanishing polynomials are
\begin{eqnarray}P_{()}(x) \lineup = 1,
\ \ \ \ \ \ \ \ \ \ \ \ \ \ \ \ \ \ \ \ \ \ \ \ \ \,
P_{(2)}(x) = -\frac{1}{3} + \frac{4}{3}x^2,\nonumber\\
P_{(2,2)}(x) \lineup = \frac{1}{9} - \frac{8}{9}x^2+\frac{16}{9}x^4,
\ \ \ \ \ \ 
P_{(4)}(x) = \frac{1}{30}-\frac{16}{30}x^4.
\end{eqnarray}
To compute $X_N,Y_N$, replace the factors of $K$ multiplying $\Omega^N$ in
\eq{ph_sums} with derivatives via the formula,
\begin{equation}(-K)^k\Omega^N = 
\left.\frac{d^k}{d\alpha^k}\Omega^\alpha\right|_{\alpha=N},\end{equation}
and plug in the Fock space expansion \eq{fock_wedge}. The coefficient of the 
state labeled by $\vec{n}$ will then be
\begin{eqnarray}\sum_{0\leq k\leq N/2}{N \choose 2k}a^{2k}
\left.\frac{d^k}{d\alpha^k}
P_{\vec{n}}\left(\frac{1}{\alpha+1}\right)\right|_{\alpha=N}\lineup
\ \ \ \ \mathrm{for}\ \ 
X_N,\nonumber\\
\frac{1}{N} \sum_{0\leq k\leq \frac{N-1}{2}}{N \choose 2k+1}
a^{2k}\left.\frac{d^k}{d\alpha^k}P_{\vec{n}}\left(\frac{1}{\alpha+1}\right)
\right|_{\alpha=N}\lineup\ \ \ \ \mathrm{for}\ \ \, Y_N.
\label{eq:fpm_coef}\end{eqnarray}
The miracle of convergence as $N\to\infty$ is due to the following 
identities: 
\begin{eqnarray}\lim_{N\to\infty}\left[
\sum_{0\leq k\leq N/2}{N \choose 2k}a^{2k}
\left.\frac{d^k}{d\alpha^k}\frac{1}{(\alpha+1)^h}\right|_{\alpha=N}\right]
\lineup = 0, \label{eq:limits1}\\
\lim_{N\to\infty}\left[\sum_{0\leq k\leq \frac{N-1}{2}}{N \choose 2k+1}
a^{2k}\left.\frac{d^k}{d\alpha^k}\frac{1}{(\alpha+1)^h}\right|_{\alpha=N}
\right]\lineup =0,\label{eq:limits2}\end{eqnarray}
which assume 
\begin{equation}h>0,\ \ \ a\in[-\sqrt{2},\sqrt{2}].\end{equation}
Thus taking the $N\to\infty$ of equation \eq{fpm_coef}, all nonzero powers 
of $\frac{1}{1+\alpha}$ are killed, leaving
\begin{eqnarray}
P_{\vec{n}}(0)\lineup\ \ \ \ \mathrm{for}\ \ 
\lim_{N\to\infty}X_N,\nonumber\\
P_{\vec{n}}(0)\lineup\ \ \ \ \mathrm{for}\ \ 
\lim_{N\to\infty}Y_N.
\end{eqnarray}
These are exactly the coefficients of the sliver state.

To prove the identities \eq{limits1} and \eq{limits2}, it is helpful to 
represent the ratio $\frac{1}{1+\alpha}$ as an integral: 
\begin{equation}\frac{1}{(\alpha+1)^h} = \frac{1}{(h-1)!}\int_0^\infty dt\, 
t^{h-1} e^{-(\alpha+1)t}.\end{equation}
Substituting in \eq{limits1} converts the sum into a simple integral:
\begin{eqnarray}\lineup \sum_{0\leq k\leq N/2}{N \choose 2k}a^{2k}
\left.\frac{d^k}{d\alpha^k}\frac{1}{(\alpha+1)^h}\right|_{\alpha=N}\nonumber\\
\lineup\ \ \ \ \ \ \ \ \ 
= \frac{1}{(h-1)!}\int_0^\infty dt\, t^{h-1}e^{-t}\frac{1}{2}\left(
\Big[(1+ia\sqrt{t})e^{-t}\Big]^N+\Big[(1-ia\sqrt{t})e^{-t}\Big]^N\right).
\label{eq:int_limit}\end{eqnarray}
Note the similarity of the integrand with \eq{XYnosum}. Let us assume that the two terms in the integrand should be bounded in 
absolute value in the $N\to\infty$ limit. This requires
\begin{equation}|(1\pm ia\sqrt{t})e^{-t}|^2\leq 1,\end{equation}
which implies that $a$ must be a real number in the range 
$[\sqrt{2},-\sqrt{2}]$.\footnote{Numerical computations 
suggest that that \eq{limits1} may hold even in a limited region of the 
complex plane around the line segment $-\sqrt{2}\leq a\leq\sqrt{2}$. We do not
have an analytic proof, however.} Now we can compute the $N\to\infty$
limit in the integrand by simply noting
\begin{equation}\lim_{N\to\infty}\Big[(1\pm ia\sqrt{t})e^{-t}\Big]^N = 
\left\{\begin{matrix}1 &\ \ \mathrm{at}\ t=0\\
0 & \ \ \mathrm{otherwise}\end{matrix}\right..
\end{equation}
Since the integrand vanishes almost everywhere, the sum \eq{limits1} must 
vanish. This completes the proof that $X_N$ and $Y_N$ approach this sliver
state in the large $N$ limit. 

\begin{figure}
\begin{center}
\resizebox{3.9in}{2.7in}{\includegraphics{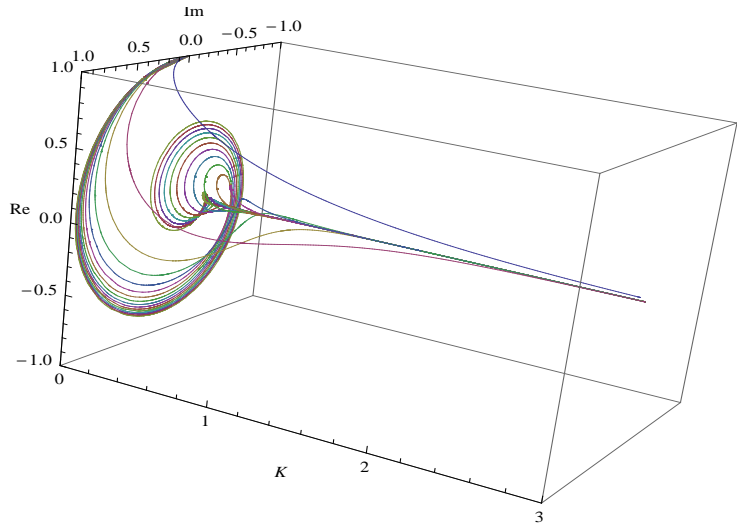}}
\end{center}
\caption{\label{fig:exotic_phantom} Plots of $[(1+ i\sqrt{K})e^{-K}]^N$ as a 
function of $K$ for $N=1,3,5,...31$. The 
vertical axis is the real part, the forward axis is the imaginary part, and
the horizontal axis is $K$.}
\end{figure}

Perhaps this result is not surprising, since the sliver state is the 
only projector which could have emerged from the $N\to\infty$ limit.  
What is more interesting is the manner in which $X_N$ and $Y_N$ approach the 
sliver. Recall that the sliver state corresponds to a function of $K$
which takes the value $1$ at $K=0$ and vanishes everywhere else. 
$X_N$ and $Y_N$ approach the sliver by a sequence of functions 
\begin{equation}[(1+ ia\sqrt{K})e^{-K}]^N.\label{eq:XYmod}\end{equation}
A plot of the real and imaginary parts of these functions is shown in figure 
\ref{fig:exotic_phantom}. The plot reveals a spiral, which as $N$ 
increases, winds increasingly many times around the $K$ axis and 
becomes increasingly damped away from $K=0$. As $N\to\infty$, the function 
$[(1+ ia\sqrt{K})e^{-K}]^N$ vanishes for $K>0$, but for infinitesimal $K$ 
winds so many times that it fills the whole unit disk in the complex plane. 
Now compare this to how wedge states approach the sliver at large wedge 
angle, corresponding to the sequence of functions $e^{-NK}$. The large $N$ 
limit of $e^{-NK}$ reveals none of the highly oscillatory behavior seen in 
figure \ref{fig:exotic_phantom}. We can formalize this qualitative 
observation as follows. Assume, following a proposal of 
Rastelli\cite{Rastelli}, that convergence in the wedge algebra is determined 
by the norm
\begin{equation}||f(K)|| = \sup|f(K)|.\label{eq:Rast_norm}\end{equation}
In the large $N$ limit, $X_N$ and $Y_N$ could be considered ``close'' to 
a wedge state if there were a set of numbers $\sigma(N),\rho(N)$ such that the 
sequence of norms
\begin{equation}||X_N - \Omega^{\sigma(N)}||,\ \ \ \ 
||Y_N - \Omega^{\rho(N)}||\label{eq:bad_seq}\end{equation}
converged to zero. However, this is impossible; $\Omega^{\sigma(N)}$ 
only takes values between $0$ and $1$, whereas $X_N$ and $Y_N$ take all 
values in the interval $[-1,1]$ for sufficiently large $N$. We can make a 
similar observation by looking at the Fock space expansion. In appendix 
\ref{app:largeN} we compute the leading order corrections to the sums 
\eq{limits1} and \eq{limits2}:
\begin{eqnarray}\sum_{0\leq k\leq N/2}{N \choose 2k}a^{2k}
\left.\frac{d^k}{d\alpha^k}\frac{1}{(\alpha+1)^h}\right|_{\alpha=N}
\lineup = (-1)^h \frac{2(2h-1)!}{(h-1)!}\frac{1}{(aN)^{2h}} + ... ,\nonumber\\ 
\sum_{0\leq k\leq \frac{N-1}{2}}{N \choose 2k+1}
a^{2k}\left.\frac{d^k}{d\alpha^k}\frac{1}{(\alpha+1)^h}\right|_{\alpha=N}
\lineup =(-1)^{h+1}
\frac{4N(2h-3)!}{(h-2)!}\frac{1}{(aN)^{2h}}+....\label{eq:reg_limits}
\end{eqnarray}
Plugging these into \eq{fpm_coef}, we find that the leading order
correction to $X_N$ and $Y_N$ for large $N$ is numerically quite different 
from that of a wedge state with large wedge angle, especially due to the 
$h$-dependent factors in \eq{reg_limits}. In principle these differences 
could have an important effect on calculations involving the phantom term. 
Therefore, though $X_N$ and $Y_N$ approach the sliver, it is not 
the ``same'' sliver as $\Omega^N$ for large 
$N$.\footnote{Note that the analog of $X_N,Y_N$ for the
tachyon vacuum solutions of \cite{SSF2} is the infinite power $f(K)^N$, 
where $f$ is some function of $K$ satisfying the constraints 
$f(0)=1,f'(0)=-\gamma<0$ and $|f(K)|\leq 1$. We can show that
the leading correction to the Fock space coefficients for any such state 
match those of a wedge state $\Omega^{\gamma N}$. Moreover, the sequence of
norms $||f^N - \Omega^{\gamma N}||$ converges to zero. In this sense, 
$f(K)^N$ is approximately equal to the wedge state $\Omega^{\gamma N}$ for 
large $N$.}

This motivates us to introduce a new class of states, more general than 
wedge states, which could describe the large $N$ behavior of $X_N$ and $Y_N$:
\begin{equation}e^{-\alpha K}e^{i\beta G}= 
\Omega^\alpha\left[\cos(\beta\sqrt{K})+iG\, 
\frac{\sin(\beta\sqrt{K})}{\sqrt{K}}\right].\label{eq:ext_wedge}\end{equation}
We will call these {\it super-wedge states}. It turns out that super-wedge 
states cannot be easily described as a superposition of wedge surfaces. We will
say more about how these states can be constructed in the next section 
and appendix \ref{app:largeN}. Consider the large $N$ limit of the 
super-wedge state
\begin{equation}\Big(e^{iaG}\Omega^{1-\frac{a^2}{2}}
\Big)^N=\hat{X}_N+(iaNG)\hat{Y}_N.
\label{eq:Gwedge}\end{equation}
One can check that the sequence of norms
\begin{equation}||X_N - \hat{X}_N||,\ \ \ \ \ 
||Y_N - \hat{Y}_N||\end{equation}
converges to zero, and moreover the leading large $N$ correction to the 
Fock space coefficients of $(\hat{X}_N,\hat{Y}_N)$ match those of 
$(X_N,Y_N)$. We can therefore simplify the phantom piece for large $N$
by replacing
\begin{equation}\Big[(1+iaG)\Omega\Big]^N\ \
\rightarrow\ \  \left(e^{ia G}\Omega^{1-\frac{a^2}{2}}\right)^N,
\end{equation}
and 
\begin{equation} \frac{K}{1-(1+iaG)\Omega}\ \ \rightarrow\ \ \frac{i}{a}G,
\end{equation}
which is the leading term in the $\mathcal{L}^-$ level 
expansion of this factor.\footnote{Sometimes it is 
necessary to include more than the leading term in the $\mathcal{L}^-$ level 
expansion\cite{Erler}, but ignore this possibility for simplicity.} The phantom
term therefore simplifies to
\begin{equation}\hat{\Psi}_N = 
\frac{i}{a}c\left[ GB \left(e^{iaG}\Omega^{1-\frac{a^2}{2}}\right)^N \right]
c\,(1+iaG)\Omega,
\label{eq:reg_phantom}\end{equation}
and we can express the Schnabl-like solution in regularized form,
\begin{equation}\Psch = \lim_{N\to\infty}\left[\hat{\Psi}_N
-\sum_{n=0}^N\psi_n'+\Gamma\right].\label{eq:reg_sum_phantom}
\end{equation}
Unlike \eq{sum_phantom}, this expression is only valid in the $N\to\infty$ 
limit. 

A few comments about the parameter $a$. Note that the phantom 
term \eq{reg_phantom} is manifestly singular as $a$ approaches zero. This 
corresponds to the fact that at $a=0$ $\Psch$ is actually a solution for the 
the tachyon vacuum, which requires a different phantom piece. 
Also note we needed to fix $a$ to lie within a restricted range 
$-\sqrt{2}\leq a\leq\sqrt{2}$. We do not know whether this bound 
reflects a limitation of the regularization \eq{reg_sum_phantom} or a 
deeper problem with the solution when $a$ sits outside the restricted range.

\subsection{An Aside: General States in the Wedge Algebra}
\label{subsec:gen_f}

Up to now, most states in the wedge algebra have been assumed to 
take the form
\begin{equation}f(K) = \int_0^\infty dt \tilde{f}(t)\Omega^t,
\label{eq:laplace_f}\end{equation}
and thus are a ``linear combination'' of wedge states. When this 
integral converges, the function $f(K)$ can be identified with the Laplace 
transform of $\tilde{f}(t)$. However, the set of functions which can be 
represented as a Laplace transform in this sense is limited. Here we 
would like to give a more general construction of $f(K)$ motivated by our
analysis of the phantom term.
 
Consider the space of polynomials over a variable $x$, which we denote 
$\mathbb{R}[x]$. Here, $x$ will be identified with the ratio 
$\frac{1}{1+\alpha}$ in the Fock space expansion of the wedge state 
$\Omega^\alpha$. Given a suitable function $f(t)$, we define a linear 
functional $L_f$ on $\mathbb{R}[x]$ as follows:
\begin{eqnarray}L_f(x^0) \lineup = f(0),\nonumber\\ 
L_f(x^1) \lineup = \int_0^\infty dt\, f(t)e^{-t},\nonumber\\
\lineup\vdots\nonumber\\ 
L_f(x^h)\lineup = \frac{1}{(h-1)!}\int_0^\infty dt\, t^{h-1}f(t)e^{-t}.
\label{eq:Lf}\end{eqnarray}
With the help of this functional, we define the state,
\begin{equation}f(K) = L_f\left(\Omega^\alpha\right),\ \ \ \ \ 
x=\frac{1}{1+\alpha}.
\label{eq:gen_f}\end{equation}
This should be understood as a definition of $f(K)$ in the Fock 
space. One can check that \eq{gen_f} and \eq{laplace_f} give exactly the same 
expressions for the coefficients of $f(K)$ in the domain 
where both formulas are defined. However \eq{gen_f} is much more general. 
For example, \eq{gen_f} allows one to construct a string field for any $f(K)$
in the algebra of bounded, continuous functions on the positive real line.
The existence of such string fields is implied by Rastelli's proposal for the 
definition of the algebra of wedge states\cite{Rastelli}.

The simplest example of a state which we can construct from \eq{gen_f}, 
but not \eq{laplace_f}, is a wedge state with  complex wedge angle
\begin{equation}\Omega^{\alpha+i\beta}.\end{equation}
The linear functional \eq{Lf} is
\begin{equation}L_{\Omega^{\alpha+i\beta}}(x^h) = 
\frac{1}{(h-1)!}\int_0^\infty dt\, t^{h-1}e^{-(\alpha+1)t+i\beta t}
=\frac{1}{(1+\alpha+i\beta)^h},\ \ \ \ \ \alpha>-1.\label{eq:complex_wedge}
\end{equation}
This is the obvious analytic continuation of the usual Fock space expansion
of a wedge state to complex wedge angle. But note that the linear functional
only converges if $\alpha>-1$. This ``explains'' the curious relation between
the Fock space coefficients of wedge states with positive and negative wedge 
angle:
\begin{equation}\Omega^\alpha=\Omega^{-2-\alpha}\ \ \ ?\end{equation}
In reality an inverse wedge state for $\alpha<-1$ is divergent in the Fock 
space, and not the analytic continuation of \eq{complex_wedge}.

Another example is the state
\begin{equation}f(K) = \frac{\lambda K\Omega}{1-\lambda\Omega}.\label{eq:prgg}
\end{equation}
This is the pure gauge solution of Schnabl\cite{Schnabl},
either in the ghost number zero toy model or in the ghost number one case
after ignoring the $B,c$ insertions. For $|\lambda|<1$ we can express this
as a Laplace transform by making a geometric series 
expansion of the denominator, but for $|\lambda|>1$ this series is divergent. 
Still we can define the linear functional
\begin{equation}L_f(x^h) = \frac{1}{(h-1)!}\int_0^\infty dt 
\frac{\lambda t^h e^{-2t}}{1-\lambda e^{-t}} = \lambda h\,\Phi(\lambda,h,2), 
\ \ \ \lambda\ngtr 1,\end{equation}
where $\Phi(z,s,v)$ is the Lerch function. The integral is absolutely 
convergent as long as $\lambda$ is not a real number greater than one. 
This suggests that the pure gauge solutions are defined even for 
$\lambda<-1$, though the geometric series is divergent. This would bring 
Schnabl's pure gauge solutions into line with the pure gauge solutions of 
\cite{simple}, which are also nonsingular for $\lambda<-1$.

As a final (somewhat peculiar) example, consider the characteristic 
function on the interval $[a,b]>0$:
\begin{equation}{\bf 1}_{[a,b]}(K)= \left\{\begin{matrix}1 & \ \mathrm{for}\ 
a\leq K\leq b\\ 0 & \ \mathrm{otherwise}\end{matrix}\right..\end{equation}
Formally these are all projectors in the wedge algebra, and if the interval
does not include $0$, the projectors are orthogonal to the sliver state.
There is no hope of representing such states as a Laplace transform 
\eq{laplace_f}, but we can still define them using the functional
\begin{equation}L_{{\bf 1}_{[a,b]}}(x^h) = \sum_{n=0}^{h-1}\frac{a^n}{n!}e^{-a}
-\sum_{n=0}^{h-1}\frac{b^n}{n!}e^{-b}.
\end{equation}
Note that these projectors are infinite rank. Unlike the sliver, they are 
difficult to reach by taking the infinite power of a ``reasonable'' $f(K)$ 
in the wedge algebra. While one can apparently define such an $f(K)$ 
using \eq{gen_f}, it would have to satisfy the awkward constraint of being 
exactly equal to unity on the interval $[a,b]$, and strictly less than unity 
in absolute value outside that interval. 

We have not confirmed whether any of these states behave as expected 
under star multiplication. Since they are not linear combinations of surfaces, 
one cannot study their star products using the usual gluing rules of 
conformal field theory. Nevertheless, these states are concrete 
constructions which could be interesting for future study.

\section{Observables}
\label{sec:observables}

\subsection{Gauge Invariant Overlap for Simple Half-Brane Solution}

We start with the simplest computation, that of the gauge invariant
overlap for the simple half-brane solution \eq{simple}:
\begin{equation}W(\Psimp,\mathcal{V}) = 
\llangle\Psimp\rrangle_\mathcal{V}.\label{eq:simp_ov}
\end{equation}
Here the bracket $\llangle\cdot\rrangle_\mathcal{V}$ is defined
in the same way as the vertex $\llangle\cdot\rrangle$ (see 
\eq{left_vertex}) except the picture changing operator $Y_{-2}$ is replaced 
by an on shell closed string vertex operator $\mathcal{V}(i)$ inserted at the 
midpoint. We assume $\mathcal{V}$ is an NS-NS closed string vertex 
operator of the form,
\begin{equation}\mathcal{V}(z) = c\tilde{c}e^{-\phi}e^{-\tilde{\phi}}
\mathcal{O}^{(\frac{1}{2},\frac{1}{2})}(z,\bar{z}),\label{eq:V}\end{equation}
where $\mathcal{O}^{(\frac{1}{2},\frac{1}{2})}$ is a weight $(\half,\half)$ 
superconformal matter primary. We work in the small Hilbert space, so the 
$\xi$ zero mode is absent. If the interpretation of Ellwood\cite{Ellwood} is 
correct, the gauge invariant overlap should represent the shift in the closed 
string tadpole of the solution relative to the perturbative vacuum.

Plugging in the simple solution \eq{simple} into the overlap, the BRST 
exact term does not contribute since $\mathcal{V}$ is on-shell. Furthermore
the GSO($-$) component vanishes in the correlator. This leaves
\begin{equation}W(\Psimp,\mathcal{V}) = 
-\leftllangle cGBc\frac{G}{1+K}\rightrrangle_\mathcal{V}.
\end{equation}
Now expand $\frac{1}{1+K}$ in terms of wedge states
\begin{equation}-\leftllangle cGBc\frac{G}{1+K}\rightrrangle_\mathcal{V} =
-\int_0^\infty dt\, e^{-t}\Bigllangle cGBcG \Omega^t\Bigrrangle_\mathcal{V}.
\end{equation}
Note
\begin{equation}cGBcG\Omega^t = t^{\frac{1}{2}\mathcal{L}^-}(cGBcG\Omega). 
\end{equation}
Since the operator $t^{\frac{1}{2}\mathcal{L}^-}$ is a reparameterization, 
it leaves the bracket invariant. We can then easily evaluate the integral 
over $t$ to find
\begin{equation}W(\Psimp,\mathcal{V})
=-\llangle cGBcG\,\Omega\rrangle_\mathcal{V}.\end{equation}
Now commute the leftmost $G$ insertion towards the other $G$ insertion: 
\begin{eqnarray}W(\Psimp,\mathcal{V}) 
\lineup= -\llangle cB(cG +\delta c)G\,\Omega\rrangle_\mathcal{V}\nonumber\\
\lineup =  
-\llangle cK\,\Omega\rrangle_\mathcal{V}+
2i\llangle c\gamma BG\,\Omega \rrangle_\mathcal{V}. \label{eq:1step_ov}
\end{eqnarray}
To compute the first term note
\begin{equation}-cK\Omega = \half\mathcal{L}^- (c\Omega) +c\Omega .
\label{eq:trick1}\end{equation}
Since $\mathcal{L}^-$ kills the bracket, this leaves 
\begin{equation}W(\Psimp,\mathcal{V})=\llangle c\Omega\rrangle_\mathcal{V}
+2i\llangle c\gamma BG\,\Omega \rrangle_\mathcal{V}.\label{eq:1step_ovf}
\end{equation}
Now focus on the second term. Using cyclicity we can rewrite it in the form, 
\begin{equation}
2i\llangle c\gamma BG\Omega\rrangle_\mathcal{V} = 
i \Bigllangle\, G (c\gamma B\Omega) + (c\gamma B\Omega)G\, 
\Bigrrangle_\mathcal{V}. \label{eq:2step_ov}\end{equation}
Since $\gamma$ carries odd worldsheet spinor number, the $G$
anticommutator above is exactly the worldsheet supersymmetry 
variation $\delta$:
\begin{equation}\delta(c\gamma B\Omega) = G (c\gamma B\Omega)
+ (c\gamma B\Omega) G  .
\end{equation}
Therefore we can explicitly eliminate $G$:
\begin{eqnarray}2i\llangle c\gamma BG\,\Omega\rrangle_\mathcal{V}
\lineup =i\llangle \delta(c\gamma B\Omega)\rrangle_\mathcal{V}\nonumber\\
\lineup =\Bigllangle -2\gamma^2 B\Omega+ \frac{1}{2}c\partial cB\Omega
\Bigrrangle_\mathcal{V},\end{eqnarray}
where we used the derivation property of $\delta$ and the explicit variations
\eq{dother}. To eliminate the remaining $B$ insertion we use the 
derivation $\mathcal{B}^-$, which satisfies 
\begin{equation}\half\mathcal{B}^-K = B,\ \ \ \ \ \ 
\half\mathcal{B}^-\{B,\, c,\,\, \mathrm{or}\, \gamma\} =0,
\end{equation}
and leaves the vertex invariant. This allows us to express
\begin{equation}
-2\gamma^2 B\Omega+ \frac{1}{2}c\partial cB\Omega = 
\mathcal{B^-}\left[-2\gamma^2
\Omega -\frac{1}{2}cKc\Omega\right]-\frac{1}{2}c\Omega.
\end{equation}
The $\mathcal{B}^-$ term kills the bracket leaving
\begin{equation}2i\llangle c\gamma B G\,\Omega\rrangle_\mathcal{V}
 = -\frac{1}{2}\llangle c\Omega\rrangle_\mathcal{V}.
\label{eq:2step_ovf}\end{equation}
Plugging into \eq{1step_ovf} and gives the final answer:
\begin{equation}W(\Psimp,\mathcal{V}) = 
\frac{1}{2}\llangle c\Omega\rrangle_\mathcal{V}.
\label{eq:ov}\end{equation}
This is exactly one-half the value of the overlap at the tachyon
vacuum\cite{Ellwood}. This confirms that the half-brane 
solutions are not gauge equivalent to either the tachyon vacuum or the 
perturbative vacuum (where the overlap vanishes identically). It also 
indicates that half-brane solutions must source closed strings with half
the strength of a non-BPS D-brane.   

\subsection{Gauge Invariant Overlap for Schnabl-Like Half-Brane Solution}

As a check on the consistency of our results, we would like to compute 
the gauge invariant overlap for the Schnabl-like solution. 
Plugging in \eq{sum_phantom} we find
\begin{eqnarray}W(\Psch,\mathcal{V})\lineup = 
\llangle\Psch\rrangle_\mathcal{V},\nonumber\\
\lineup = \llangle\Psi_{N+1}\rrangle_\mathcal{V} - 
\sum_{n=0}^N\llangle \psi_n'\rrangle_\mathcal{V} + \llangle 
\Gamma\rrangle_\mathcal{V}.
\label{eq:sch_ov}\end{eqnarray}
The second two terms do not contribute to the overlap. The
easiest way to see this is to note that the overlap for the 
pure gauge solution
\begin{equation}\Psi_\lambda = -\sum_{n=0}^\infty \lambda^n \psi_n' 
+\lambda\Gamma\label{eq:sch_gauge}\end{equation}
must vanish order by order in $\lambda$ by gauge invariance. 
Therefore \eq{sch_ov} simplifies to
\begin{equation}W(\Psch,\mathcal{V}) = 
\llangle \Psi_N\rrangle. \end{equation}
This equation holds for any $N$, but we will be interested in the limit
$N\to\infty$. 

Plugging in the \eq{phantom} for $\Psi_{N}$ the overlap reduces to 
a sum of four terms:
\begin{eqnarray}W(\Psch,\mathcal{V})
\lineup = \ \ \ 
\leftllangle c\,\frac{KB(1-\Omega)}{(1-\Omega)^2+a^2K\Omega^2}
\,X_N\,c\,\Omega\rightrrangle_\mathcal{V}\nonumber\\ 
\lineup \ \ - Na^2 \leftllangle c\, 
\frac{K^2B\Omega}{(1-\Omega)^2+a^2K\Omega^2}\,
Y_N\,c\,\Omega\rightrrangle_\mathcal{V}\nonumber\\
\lineup \ \ \ -a^2 \leftllangle c\,\frac{KGB\Omega}{(1-\Omega)^2+a^2K\Omega^2}
\,X_N\,c\,G\,\Omega\rightrrangle_\mathcal{V}\nonumber\\
\lineup -Na^2 \leftllangle c\,\frac{KGB(1-\Omega)}{(1-\Omega)^2+a^2K\Omega^2}
\,Y_N\,c\,G\,\Omega\rightrrangle_\mathcal{V}.
\label{eq:4terms}\end{eqnarray}
Now in each of the four terms expand all of the wedge state factors besides 
$X_N$ and $Y_N$ in powers of $K$: 
\begin{eqnarray}
\leftllangle c\,\frac{KB(1-\Omega)}{(1-\Omega)^2+a^2K\Omega^2}
\,X_N\,c\,\Omega\rightrrangle_\mathcal{V}
\lineup = \sum_{m\geq1,n\geq1} C^{(1)}_{mn}\llangle c\,B\,K^m\, X_N\,c\,K^n
\rrangle_\mathcal{V},
\nonumber\\\label{eq:ds1}\\
-Na^2 \leftllangle c\, \frac{K^2B\Omega}{(1-\Omega)^2+a^2K\Omega^2}\,
Y_N\,c\,\Omega\rightrrangle_\mathcal{V} 
\lineup = \sum_{m\geq1,n\geq1}C^{(2)}_{mn}\llangle c\,B\,K^m\, Y_N\,c\,K^n
\rrangle_\mathcal{V},
\nonumber\\\label{eq:ds2}\\
-a^2 \leftllangle c\,\frac{KGB\Omega}{(1-\Omega)^2+a^2K\Omega^2}
\,X_N\,c\,G\,\Omega\rightrrangle_\mathcal{V} 
\lineup= \sum_{m\geq0,n\geq0}
C^{(3)}_{mn}\llangle c\,B\,G\,K^m\, X_N\,c\,G\,K^n\rrangle_\mathcal{V},
\nonumber\\ \label{eq:ds3}\\
-Na^2 \leftllangle c\,\frac{KGB(1-\Omega)}{(1-\Omega)^2+a^2K\Omega^2}
\,Y_N\,c\,G\,\Omega\rightrrangle_\mathcal{V} 
\lineup= \sum_{m\geq1,n\geq0}
C^{(4)}_{mn}\llangle c\,B\,G\,K^m\, Y_N\,c\,G\,K^n
\rrangle_\mathcal{V},
\nonumber\\ \label{eq:ds4}
\end{eqnarray}
where $C_{m,n}^{(a)}$ are constants. Let us focus on \eq{ds3}. To calculate 
the double sum, we should compute the traces 
\begin{equation}
\llangle c\,B\,G\,K^m\, X_N\,c\,G\,K^n\rrangle_\mathcal{V}.
\end{equation}
Plugging in \eq{ph_sums} for $X_N$ this becomes
\begin{equation}\llangle cBGK^m\, X_N\,cGK^n\rrangle_\mathcal{V}  = 
\sum_{0\leq K\leq N/2}{N \choose 2k}a^{2k}\left.\frac{d^k}{d\alpha^k}
\llangle cBG K^m\Omega^\alpha cG K^n\rrangle_\mathcal{V}\right|_{\alpha=N}.
\end{equation}
Now reparameterize the bracket with $\mathcal{L}^-$ to factor the $\alpha$
dependence:  
\begin{equation} \llangle cBGK^m\, X_N\,cGK^n\rrangle_\mathcal{V} = 
\llangle cBG K^m\Omega cG K^n\rrangle_\mathcal{V}
\sum_{0\leq K\leq N/2}{N \choose 2k}a^{2k}\left.\frac{d^k}{d\alpha^k}
\frac{1}{\alpha^{m+n}}
\right|_{\alpha=N}.\end{equation}
We recognize the sum on the right hand side from the identity \eq{limits1}. 
Provided $a\in[-\sqrt{2},\sqrt{2}]$ and $m+n>0$, this vanishes in the 
$N\to \infty$ limit. Therefore if we take $N\to\infty$ only the $m=n=0$ 
term in \eq{ds3} contributes to the overlap:
\begin{equation}
-a^2 \lim_{N\to\infty}
\leftllangle c\,\frac{KGB\Omega}{(1-\Omega)^2+a^2K\Omega^2}
\,X_N\,c\,G\,\Omega\rightrrangle_\mathcal{V} 
= -\llangle cBG \Omega cG\rrangle_\mathcal{V}.\label{eq:sole_cont}
\end{equation}
Now repeat this argument for equations \eq{ds1}, \eq{ds2} and \eq{ds4}.
However, this time the range of summation over $m,n$ excludes all traces 
which could make a nonzero contribution in the $N\to\infty$ limit. So, 
in fact, \eq{sole_cont} is the only contribution to the overlap in the 
$N\to\infty$ limit, and we find
\begin{equation}W(\Psch,\mathcal{V})=
-\llangle cBG \Omega cG\rrangle_\mathcal{V}.\end{equation}
With a few manipulations this can be rewritten,
\begin{equation}W(\Psch,\mathcal{V})=  
-2i\llangle c\gamma BG\Omega\rrangle_\mathcal{V}.\end{equation}
From here on the derivation follows the steps given in 
\eq{2step_ov}-\eq{2step_ovf}, with an extra minus sign, to yield
\begin{equation}W(\Psch,\mathcal{V})=
\frac{1}{2}\llangle c \Omega\rrangle_\mathcal{V}.\end{equation}
in agreement with the overlap for the simple half-brane solution, \eq{ov}.

\subsection{Energy}

Let us now calculate the energy. The energy can be computed from the  
on-shell action:
\begin{equation}E = -S[\Psi] = -\frac{1}{6}\llangle \Psi Q\Psi\rrangle.
\label{eq:energy}\end{equation}
We have only attempted this calculation for the simple half-brane 
solution \eq{simple}, though the final answer should be the same 
for any sufficiently regular half-brane solution. Plugging in $\Psimp$
we find the expression
\begin{equation}E = 
-\frac{1}{6}\left[-\leftllangle cGBc\frac{1}{1+K}Q(cGBc)\frac{1}{1+K}
\rightrrangle
+\leftllangle cGBc\frac{G}{1+K}Q(cGBc)\frac{G}{1+K}\rightrrangle\right].
\end{equation}
To simplify we eliminate the $G$ insertions by repeated 
use of the identity
\begin{equation}\llangle G\Phi \rrangle =\half\llangle \delta\Phi\rrangle.
\label{eq:Gdelta}\end{equation}
The calculation is straightforward, but tedious; the repeated supersymmetry
variations generate dozens of terms. We give some details in appendix 
\ref{app:energy}. In the end, all of the inner products can be evaluated
with the correlation function
\begin{equation}\left\langle Y_{-2}\,c(x_1)c(x_2)\gamma(y_1)\gamma(y_2)
\int_{-i\infty}^{i\infty}\frac{dz}{2\pi i}b(z)\right\rangle_{C_L}
=-\frac{L}{2\pi^2}(x_1-x_2)\cos\frac{\pi(y_1-y_2)}{L}
\label{eq:corr}\end{equation}
evaluated on a cylinder of circumference $L$. Adding 
the terms up, the energy turns out to be
\begin{equation}E=-\frac{1}{4\pi^2},\end{equation}
which is precisely $-1/2$ times the tension of the D-brane. Remarkably, this
is consistent with the computation of the overlap.

\section{Concluding Remarks}

In this paper we have presented a new class of nonperturbative analytic 
solutions of cubic superstring field theory on a non-BPS D-brane. The
nature of these solutions is fundamentally mysterious; they violate the reality
condition, and appear not to exist in Berkovits string field theory. 
Probably they are only formal artifacts of the cubic equations of motion. 
However, their existence is nontrivial and seems significant. We hope that 
further study will shed light into what these solutions represent and why
they exist. 

One immediate consequence of our analysis is that the cubic and Berkovits 
equations of motion are not equivalent. The fact that half-brane solutions
appear to come in two topologically distinct varieties, and the existence of a
 ``tachyon vacuum'' on a BPS D-brane, suggest that the failure of this 
equivalence is related to topological charge. A microscopic understanding 
of D-brane charges is one of 
the longstanding goals of string field theory.  We hope that continued 
development along these lines will give further insight.

\bigskip
\bigskip
\noindent {\bf Acknowledgments}
\bigskip

\noindent The author thanks M. Schnabl and C. Maccaferri for interesting 
conversations, and M. Schnabl and Y. Okawa for critical reading of the 
manuscript. The author also thanks C. Maccaferri 
hospitality during a visit to ULB Brussells, and D. Gross for hospitality 
at the Kavli Institute for Theoretical Physics. This research was supported 
by the EURYI grant GACR EYI/07/E010 from EUROHORC and ESF.

\begin{appendix}

\section{Vertices, Reality and Twist conjugation}
\label{app:conventions}

In this appendix we discuss some important signs connected with the vertices 
and the reality condition in the GSO($-$) sector. Our discussion extends the 
classic analysis of Ohmori\cite{Ohmori} to our preferred ``left handed'' 
star product convention\cite{simple}.

\bigskip
\noindent{\bf Definition of vertices:} In the left handed convention, 
the $N$-string vertex of open string
fields $\Phi_k = \sigma_{i_k}\phi_k(0)|0\rangle$ is defined as a correlator
on the upper half plane as follows:
\begin{equation}\llangle \Phi_1,\Phi_2,...,\Phi_N\rrangle =
\frac{1}{2}\mathrm{tr}(\sigma_3\sigma_{i_1}\sigma_{i_2}...\sigma_{i_N})
\Big\langle Y_{-2}\ f_{1,N}\circ\phi_1(0)\,f_{2,N}\circ\phi_2(0)\,...
\,f_{N,N}\circ\phi_N(0)\Big\rangle,\label{eq:left_vertex}\end{equation}
where 
\begin{equation}Y_{-2} = Y(i)\tilde{Y}(i),\ \ \ 
Y(z) = -\partial\xi e^{-2\phi}c(z), \end{equation}
and the conformal maps defining the vertex are,
\begin{equation}f_{k,N}(z) = -\cot\left[\frac{2}{N}\tan^{-1}z -\frac{\pi}{N}
\left(k-\frac{1}{2}\right)\right].\end{equation}
Let us define
\begin{equation}
x_{k,N} = \cot \frac{\pi}{N}\left(k-\frac{1}{2}\right),\ \ \ \ \ 
y_{k,N}=-\sqrt{\frac{2}{N}}\csc\frac{\pi}{N}\left(k-\frac{1}{2}\right).
\end{equation}
Then $f_{k,N}$ acts on a primary of weight $h$ explicitly as
\begin{equation}f_{k,N}\circ\phi(0) 
= \left(y_{k,N}\right)^{(2h)} \phi(x_{k,N}),
\end{equation}
where the parentheses around $2h$ implies that $h$ must be multiplied by two
{\it before} the power of $y_{k,N}$ is taken. Note that 
\begin{equation}x_{1,N} > x_{2,N} > ... >x_{N,N},\end{equation}
so the position of the vertex operator on the real axis {\it decreases} as we
increase the string label $k$ in the vertex. This is the hallmark of the left 
handed star product convention.

By contrast, in the right handed convention the $N$-string 
vertex would be defined as
\begin{equation}
\llangle \Phi_1,\Phi_2,...,\Phi_N\rrangle_R =
\frac{1}{2}\mathrm{tr}(\sigma_3\sigma_{i_1}\sigma_{i_2}...\sigma_{i_N})
\Big\langle Y_{-2}\ \tilde{f}_{1,N}\circ\phi_1(0)\,
\tilde{f}_{2,N}\circ\phi_2(0)\,...
\,\tilde{f}_{N,N}\circ\phi_N(0)\Big\rangle,\label{eq:right_vertex}
\end{equation}
where the superscript $R$ reminds us that the vertex is defined in the right
handed convention. The conformal maps $\tilde{f}_{k,N}$ are related to 
$f_{k,N}$ simply as,
\begin{equation}\tilde{f}_{k,N}(z) = f_{N+1-k,N}(z).\end{equation}
The positions of the vertex operators on the real axis are
\begin{equation}\tilde{x}_{k,N} = x_{N+1-k,N},\end{equation}
so the position {\it increases} as we increase the string label. This is 
the hallmark of the right handed star product convention. 

The vertices \eq{left_vertex} and \eq{right_vertex} implicitly define the 
open string star product. The left handed star product $\Psi\Phi$ and right 
handed star products $[\Psi\Phi]_R$ are related by the equation,
\begin{equation}[\Psi\Phi]_R = (-1)^{E(\Psi)E(\Phi)+F(\Psi)F(\Phi)}\Phi\Psi.
\end{equation}
The sign appears from anticommuting vertex operators and internal CP factors. 

Let us consider the 2-string vertex. Following \cite{Ohmori}, we define
the action of the BPZ conformal map $I(z) = -\frac{1}{z}$ on a primary of 
weight $h$
\begin{equation}I\circ\phi(z) = \phi(z)^\star 
= \frac{1}{z^{(2h)}}\phi\left(-\frac{1}{z}\right).\end{equation}
By an $SL(2,\mathbb{R})$ transformation, one can then show that the 
2-string vertex in the left and right handed conventions is given by,
\begin{eqnarray}\llangle \Phi_1,\Phi_2\rrangle \lineup =
\frac{1}{2}\mathrm{tr}(\sigma_3\sigma_{i_1}\sigma_{i_2})
\Big\langle Y_{-2}\ \phi_1(0)\,\phi_2(0)^\star\Big\rangle,\nonumber\\
\llangle \Phi_1,\Phi_2\rrangle_R \lineup =
\frac{1}{2}\mathrm{tr}(\sigma_3\sigma_{i_1}\sigma_{i_2})
\Big\langle Y_{-2}\ \phi_1(0)^\star\,\phi_2(0)\Big\rangle.\end{eqnarray}
Now transform the left handed vertex with $I(z)$ and note
\begin{equation}^{\star\star}=(-1)^F.\end{equation}
Then
\begin{equation}\llangle \Phi_1,\Phi_2\rrangle=
(-1)^F\llangle \Phi_1,\Phi_2\rrangle_R.
\end{equation}
So the 2-vertex differs by a sign in the GSO($-$) sector between the two 
star product conventions. This sign plays an important role in fixing the 
string field reality condition. 

\bigskip
\noindent{\bf Reality conjugation:} To formulate the string field reality 
condition, we need to define reality conjugation. To do this it is 
helpful to express the string field in the operator formalism. 
We can define Hermitian and BPZ conjugation of a state $|\Psi\rangle$ or dual 
state $\langle \Psi|$ using the following rules:
\begin{eqnarray}\Big[\,\mathcal{O}|0\rangle\,\Big]^\dag \lineup 
= \langle 0|\mathcal{O}^\dag,\ \ \ \ \ \ 
\Big[\,\langle 0|\mathcal{O}\,\Big]^\dag =\mathcal{O}^\dag|0\rangle, 
\nonumber\\
\Big[\,\mathcal{O}|0\rangle\,\Big]^\star \lineup 
= \langle 0|\mathcal{O}^\star,\ \ \ \ \ \ 
\Big[\,\langle 0|\mathcal{O}\,\Big]^\star =\mathcal{O}^\star|0\rangle,
\end{eqnarray}
and\footnote{We use the six pointed
star $^*$ to denote complex conjugation and the five pointed star $^\star$ to
denote BPZ conjugation, hopefully without confusion.} 
\begin{eqnarray}
(\mathcal{O}_1\mathcal{O}_2)^\dag \lineup = 
\mathcal{O}_2^\dag\mathcal{O}_1^\dag,\ \ \ \ \ \ \ \ \ \ \ \ \ \ \ \ \ \ \ 
\ \ \,
(\mathcal{O}_1\mathcal{O}_2)^\star = 
(-1)^{\mathcal{O}_1\mathcal{O}_2}\mathcal{O}_2^\star\mathcal{O}_1^\star,
\nonumber\\
(a\mathcal{O}_1 + b\mathcal{O}_2)^\dag \lineup = a^*\mathcal{O}_1^\dag + 
b^*\mathcal{O}_2^\dag\nonumber,
\ \ \ \ \ \ (a\mathcal{O}_1 + b\mathcal{O}_2)^\star = a\mathcal{O}_1^\star + 
b\mathcal{O}_2^\star,\ \ \ \ \ \ \ \ a,b\in\mathbb{C}
\end{eqnarray}
and 
\begin{equation}\phi(z)^\dag = \frac{(-1)^\nu}{\bar{z}^{(2h)}}
\phi\left(\frac{1}{\bar{z}}\right),\ \ \ \ \phi(z)^\star = \frac{1}{z^{(2h)}}
\phi\left(-\frac{1}{z}\right).\end{equation}
Here $\mathcal{O}$ are general CFT operators, $a,b$ are complex constants 
and $\phi(z)$ is a primary of weight $h$. The sign $(-1)^\nu$ is needed to 
distinguish between Hermitian and antihermitian fields. We will take the 
$\beta$ ghost to be antihermitian, so that the $\gamma$ ghost 
and the worldsheet supercurrent are Hermitian. Also, we define BPZ conjugation
to leave internal CP factors invariant, whereas Hermitian conjugation takes 
their conjugate transpose. With these definitions we define 
{\it reality conjugation} of a string field $\Psi$ as
\begin{equation}\Psi^\ddag = \Psi^{\dag\star}.
\label{eq:real_conj}\end{equation}
Note that the order matters: first we perform Hermitian and {\it then} BPZ
conjugation to compute the reality conjugate. In particular,
\begin{equation}^{\dag\star} =\, ^{\star\dag}(-1)^F.\end{equation}
Since $^{\star\star}=(-1)^F$ this implies that reality conjugation is 
idempotent,
\begin{equation}^{\ddag\ddag} = 1,\end{equation}
and therefore analogous to complex conjugation. Reality conjugation 
satisfies the important properties
\begin{equation}
(\Psi\Phi)^\ddag = \Phi^\ddag\Psi^\ddag,\ \ \ \ \ \ \ 
(a\Psi+b\Phi)^\ddag = a^*\Psi^\ddag+b^*\Phi^\ddag,\ \ \ a,b\in\mathbb{C},
\label{eq:real_conj_prop1}\end{equation}
and 
\begin{equation}\llangle \Psi\rrangle^* = 
\llangle\Psi^\ddag\rrangle.\label{eq:real_conj_prop2}\end{equation}
Equations \eq{real_conj}-\eq{real_conj_prop2} hold with the appropriate
CP factors attached to the string field. 

\bigskip

\noindent {\bf Twist conjugation:}
We define {\it twist conjugation}
\begin{equation}\Psi^\S = e^{i\pi N}\Psi, \end{equation}
where $N$ is the number operator (the zero momentum component of $L_0$).
For the bosonic string, twist conjugation is related to the 
twist operator $\Omega$ of \cite{Zwiebach,Review} by a 
sign:
\begin{equation}\Psi^\S = -\Omega\Psi.\end{equation}
Twist conjugation satisfies
\begin{eqnarray}(\Psi\Phi)^\S \lineup 
= (-1)^{E(\Psi)E(\Phi)+F(\Psi)F(\Phi)}\Phi^\S
\Psi^\S,\nonumber\\
(a\Psi+b\Phi)^\S \lineup 
= a\Psi^\S+b\Phi^\S\ \ \ \ a,b\in \mathbb{C},\nonumber\\
^{\S\S}\lineup = (-1)^F.\end{eqnarray}
and 
\begin{eqnarray}\llangle\Psi^\S\rrangle = \llangle\Psi
\rrangle.\end{eqnarray}
Twist conjugation gives a way to map between string fields in the left and
right-handed star product conventions\cite{simple}. Suppose $\Phi$ is a 
string field in a theory with left handed star product. The equivalent
field in the right handed theory is 
\begin{equation}\Phi'=\Phi^\S.\label{eq:lr_map}\end{equation}
In the right handed convention, the string field reality 
condition is\cite{Ohmori}:
\begin{equation}(\Psi')^\ddag=(-1)^F\Psi'.\end{equation}
Using \eq{lr_map} it follows that the reality condition for the left handed 
theory is
\begin{equation}\Psi^\ddag=\Psi.\end{equation}
Note that this means that real string fields in the two conventions differ by
a factor of $i$ in the GSO($-$) sector. This factor of $i$ corrects the sign
discrepancy between the 2-vertices, so the tachyon field has the correct sign
kinetic term in either convention.

\section{Superconformal Generator in the Sliver Frame}
\label{app:G}

The string field $G$ is closely related to the superconformal generator 
$G_{-1/2}$ in the sliver conformal frame:
\begin{equation}\mathcal{G} = f_S^{-1}\circ G_{-1/2} = 
\oint \frac{d\xi}{2\pi i}\left(\sqrt{\frac{\pi}{2}}\sqrt{1+\xi^2}\right)
G(\xi).\end{equation}
Since this operator is crucial to the construction of the half-brane 
solution, it is worth understanding in more detail. 

Consider an operator of the form
\begin{equation}{\bm \phi}[f]=\oint \frac{d\xi}{2\pi i}f(\xi)\phi(\xi),
\end{equation}
where $\phi$ is a primary of weight $h$, $f(\xi)$ is a function,
and the contour passes inside an annulus of analyticity of $f(\xi)$ around 
the unit circle. We define Hermitian, BPZ, and dual conjugation\cite{RZ} 
of this operator, respectively,
\begin{eqnarray}
{\bm \phi}[f]^\dag \lineup = {\bm \phi}[f^\dag],\ \ \ \ \ \ \ \ \ \ \ 
f^\dag(\xi) = (-1)^\nu\xi^{(2h-2)}f^*\left(\xi^{-1}\right),\nonumber\\
{\bm \phi}[f]^\star \lineup = {\bm \phi}[f^\star],
 \ \ \ \ \ \ \ \ \ \ \, 
f^\star(\xi) = -(-\xi)^{(2h-2)}f\left(-\xi^{-1}\right),\nonumber\\
{\widetilde{\bm \phi}}[f]\ \lineup= {\bm \phi}[\tilde{f}],
 \ \ \ \ \ \ \ \ \ \ \ \ {\tilde f}(\xi)\ = \epsilon(\xi)f(\xi).
\end{eqnarray}
Here $\epsilon(\xi)$ is the step function\footnote{$\epsilon(\xi)$ has a 
branch cut extending across the 
entire imaginary axis. To define dual conjugation carefully, one should 
represent $\epsilon(\xi)$ as the limit of a sequence of functions which are 
analytic in some (vanishingly thin) annulus containing the unit 
circle\cite{RZ}.}
\begin{equation}\epsilon(\xi) = \left\{ { 1\ \mathrm{for}\ \mathrm{Re}(\xi)>0
\atop -1\ \mathrm{for}\ \mathrm{Re}(\xi)<0}\right..\end{equation}
We also define the combinations
\begin{eqnarray}
{\bm \phi}[f]^+ \lineup = {\bm \phi}[f]+{\bm \phi}[f]^\star,
\ \ \ \ \,\ \ \ \ \ 
{\bm \phi}[f]^-={\bm \phi}[f]-{\bm \phi}[f]^\star,\nonumber\\
{\bm \phi}[f]_L \lineup = \frac{1}{2}\Big({\bm \phi}[f]
+{\widetilde {\bm \phi}}[f]\Big),\ \ \ \ \ 
{\bm \phi}[f]_R=\frac{1}{2}\Big({\bm \phi}[f]-{\widetilde {\bm \phi}}[f]
\Big).\end{eqnarray}
The subscripts $L$ and $R$ denote the left and right halves of the 
charge ${\bm \phi}[f]$. In some cases, the action of ${\bm \phi}[f]_L$ and 
${\bm \phi}[f]_R$ on a state can be described by left or right 
star multiplication with the appropriate string field. When this is possible, 
we say that ${\bm \phi}[f]$ has a {\it non-anomalous} left/right decomposition.

Consider the operators 
\begin{eqnarray}\mathcal{L} \lineup  
= {\bf T}[\ell],\ \ \ \ \ \ \ 
\ell(\xi)=(1+\xi^2)\tan^{-1}\xi ,\nonumber\\
\mathcal{G} \lineup = {\bf G}[g],
\ \ \ \ \ \ \ g(\xi) = \sqrt{\frac{\pi}{2}}\sqrt{1+\xi^2}.
\end{eqnarray}
where 
\begin{equation}{\bf T}[v]=\oint \frac{d\xi}{2\pi i}v(\xi)T(\xi),\ \ \ \ \ \ 
{\bf G}[s] = \oint \frac{d\xi}{2\pi i} s(\xi)G(\xi).\end{equation}
The first is the familiar $\mathcal{L}_0$ of Schnabl\cite{Schnabl}, and the 
second is the operator $G_{-1/2}$ in the sliver conformal frame. The 
functions $\ell(\xi)$ and $g(\xi)$ have branch points at $+i$ 
and $-i$, connected by a branch cut on the imaginary axis passing through 
infinity. The branch points of $\ell(\xi)$ takes the form $x\ln x$ for small 
$x=\xi\pm i$, whereas those of $g(\xi)$ take the form $\sqrt{x}$. The BPZ 
conjugate operators are 
\begin{eqnarray}
\mathcal{L}^\star \lineup = {\bf T}[\ell^\star],\ \ \ \ \ \ \ell^\star(\xi)=
(1+\xi^2)\tan^{-1}\frac{1}{\xi},\nonumber\\
\mathcal{G}^\star \lineup = {\bf G}[g^\star],\ \ \ \ \ \ g^\star(\xi)^=
\sqrt{\frac{\pi}{2}}\xi\sqrt{1+\frac{1}{\xi^2}}.\label{eq:LsGs}
\end{eqnarray}
$\ell^\star,g^\star$ also have branch points at $\pm i$, but the cuts now 
extend on the imaginary axis through the origin. Note that by factoring
$\xi$ into the square root in \eq{LsGs}, $g^\star$ formally appears to be the 
same as $g$. In fact, they are equal up to a sign:
\begin{equation}g^\star(\xi)=\epsilon(\xi)g(\xi).\end{equation}
This means that the BPZ conjugate of $\mathcal{G}$ is equal to
its dual conjugate:
\begin{equation}\mathcal{G}^\star=\widetilde{\mathcal{G}}.
\label{eq:BPZ_dual}
\end{equation}
It is also useful to consider the operators
\begin{eqnarray}
\mathcal{L}^+ \lineup = \mathcal{L}+\mathcal{L}^\star={\bf T}[\ell^+],
\ \ \ \ \ \ \ell^+(\xi)=
\frac{\pi}{2}(1+\xi^2)\epsilon(\xi),\nonumber\\
\tilde{\mathcal{L}}^+ \lineup = {\bf T}[\tilde{\ell}^+],\ \ \ \ \ \ \ \ \ \ \ 
\ \ \ \ \ \ \ \ \tilde{\ell}^+(\xi)=\frac{\pi}{2}(1+\xi^2).
\end{eqnarray}
The Hermitian conjugates of $\mathcal{L},\mathcal{L}^*$ and $\mathcal{G}$
are equal to their BPZ conjugates. For $\mathcal{G}^\star$ and 
$\tilde{\mathcal{L}}^+$ there is a sign difference: 
$\mathcal{G}^{\star\dag} = \mathcal{G} = 
-\mathcal{G}^{\star\star}$ and $\tilde{\mathcal{L}}^{+\dag}
=\tilde{\mathcal{L}}^+=-\tilde{\mathcal{L}}^{+\star}$.

The string fields $K$ and $G$ can be defined through the action of
$\mathcal{L}^+,\mathcal{G}$, and their dual conjugates on a test state:
\begin{eqnarray}\mathcal{L}^+\Phi \lineup = K\Phi+\Phi K,
\ \ \ \ \ \ \ \ \ \ \ \ \ \ \ \ \ \ \ \ 
\widetilde{\mathcal{L}}^+\Phi = K\Phi-\Phi K = \partial\Phi,\nonumber\\
\sigma_1\mathcal{G}^\star\Phi\lineup = G\Phi + (-1)^{F(\Phi)}\Phi G,\ \ \ \,
\ \ \ \ \ 
\sigma_1\mathcal{G}\Phi
= G\Phi-(-1)^{F(\Phi)}\Phi G = \delta\Phi.\label{eq:KGop}
\end{eqnarray}
This definition implies three consistency conditions: 
\begin{eqnarray}
\lineup {\bf 1)}\ \  \tilde{\mathcal{L}}^+|I\rangle = 0
\ \ \ \ \mathcal{G}|I\rangle=0\nonumber\\ 
\lineup {\bf 2)}\ \  \mathcal{L}^+_L (\Phi\Psi) = (\mathcal{L}^+_L\Phi)\Psi
\ \ \ \ \mathcal{L}^+_R (\Phi\Psi) = \Phi(\mathcal{L}^+_R\Psi)\nonumber\\
\lineup \ \ \ \ \ \mathcal{G}_L (\Phi\Psi) = (\mathcal{G}_L\Phi)\Psi
\ \ \ \ \ \, 
\mathcal{G}_R (\Phi\Psi) = (-1)^{\epsilon(\Phi)}\Phi(\mathcal{G}_R\Psi)
\nonumber\\
\lineup {\bf 3)}\ \ \{\mathcal{G}_L,\mathcal{G}_R\}=0,\ \ \ \ \ \  
[\mathcal{G}_L,\mathcal{L}^+_R]=0,\ \ \ \ \ \ 
[\mathcal{L}^+_L,\mathcal{L}^+_R]=0
\end{eqnarray}
The first condition follows from setting $\Phi=|I\rangle$ in \eq{KGop}. 
The second follows from associativity of the star product. The third 
condition follows from the assumption that $\mathcal{L}^+,\mathcal{G}$,
and their dual conjugates should have well defined action on $K$ and $G$;
in other words, $K$ and $G$ can be consistently star multiplied 
among themselves. If these three conditions are satisfied, $\mathcal{L}^+$ and
$\mathcal{G}$ have a non-anomalous left/right decomposition.

It is not difficult to verify conditions {\bf 1)} and {\bf 2)} by contracting
with ``reasonable'' test states (for example, wedge states of positive width
with insertions placed away from the midpoint) and mapping to the upper half 
plane. Condition {\bf 3)} is more subtle and is worth checking explicitly. 
We can compute the commutators using the superconformal algebra expressed 
in the form
\begin{eqnarray}\Big[{\bf T}[v_1],{\bf T}[v_2]\Big]\, \lineup =\, 
{\bf T}[v_2\partial v_1-v_1\partial v_2],\nonumber\\ 
\Big[{\bf T}[v],{\bf G}[s]\Big]\ \, \lineup =\, {\bf G}[\half s\partial v
- v\partial s],\nonumber\\
\Big\{{\bf G}[s_1],{\bf G}[s_2]\Big\}\, \lineup =\, 
{\bf T}[2s_1s_2].\label{eq:superconf_alg}
\end{eqnarray}
To be careful about singularities at the midpoint, we regulate 
$\mathcal{G}$ and $\mathcal{L}^+$ by replacing
\begin{eqnarray} g(\xi)\lineup \rightarrow g(\lambda\xi),\ \ \ \ \ \ \,
 g(\xi)^\star\rightarrow g(\xi/\lambda)^\star, \nonumber\\
\ell(\xi)\lineup \rightarrow \ell(\lambda\xi),\ \ \ \ \ \ \ 
\ell(\xi)^\star\rightarrow \ell(\xi/\lambda)^\star.
\end{eqnarray}
Condition {\bf 3)} is satisfied in the limit $\lambda\to 1^-$. 
Another check is to expand the operators in modes
\begin{eqnarray}
\mathcal{L}\ \lineup =L_0+\frac{2}{3}L_2-\frac{2}{15}L_4+
\ \ \ \ \ \ \ \ .\, .\, .\ \ \ \ \ \ \ \ \ 
= L_0+2\sum_{n=1}^\infty \frac{(-1)^{n+1}}{4n^2-1}L_{2n},
\nonumber\\
\mathcal{L}^\star\, \lineup 
=L_0+\frac{2}{3}L_{-2}-\frac{2}{15}L_{-4}+
\ \ \ \ \ \ .\, .\, .\ \ \ \ \ \ \ \ 
= L_0+2\sum_{n=1}^\infty \frac{(-1)^{n+1}}{4n^2-1}L_{-2n},\nonumber\\
\widetilde{\mathcal{L}}^+ \lineup = \frac{\pi}{2}(L_1+L_{-1}),\nonumber\\
\mathcal{G}\ \lineup = \sqrt{\frac{\pi}{2}}\left(G_{-1/2}
+\frac{1}{2}G_{3/2}-\frac{1}{8}G_{7/2}+...\right)\ \,
= \sqrt{\frac{\pi}{2}}\sum_{n=0}^\infty
{1/2 \choose n} G_{2n-\frac{1}{2}}, \nonumber\\
\mathcal{G}^\star \lineup  = \sqrt{\frac{\pi}{2}}\left(G_{1/2}
+\frac{1}{2}G_{-3/2}-\frac{1}{8}G_{-7/2}+...\right)
= \sqrt{\frac{\pi}{2}}\sum_{n=0}^\infty
{1/2 \choose n} G_{\frac{1}{2}-2n},\end{eqnarray}
and calculate using the usual mode commutators of the 
superconformal algebra. Again we have found that {\bf 3)} is 
satisfied, and the infinite sums needed in the
computation are absolutely convergent.\footnote{One further subtlety is that 
the vanishing of left/right commutators is not always sufficient to 
guarantee that nonpolynomial combinations of left and right charges commute. 
Splitting $\mathcal{L}$ into left and right halves we find 
$[\mathcal{L}_L,\mathcal{L}_R]=0$, but $\mathcal{L}_L$ actually 
does not commute with $e^{-s\mathcal{L}_R}$. This is crucial for recovering 
closed string moduli in Schnabl gauge amplitudes\cite{KZ,boundary}. We have 
found no evidence for similar anomalies when splitting 
$\mathcal{L}^+$ and $\mathcal{G}$.} Given these and other checks, we believe 
that $\mathcal{L}^+,\mathcal{G}$ have a non-anomalous left/right 
decomposition. 

The operators $\mathcal{L},\mathcal{L}^\star,\tilde{\mathcal{L}}^+$ and 
$\mathcal{G},\mathcal{G}^\star$ form a super-Lie algebra with commutators,
\begin{eqnarray}[\mathcal{L},\mathcal{L}^\star]\lineup =\mathcal{L}^+,
\ \ \ \ \ \ \ \ 
[\mathcal{L},\widetilde{\mathcal{L}}^+]=\widetilde{\mathcal{L}}^+,
\ \ \ \ \ \ \ \ \ \ 
[\mathcal{L}^\star,\widetilde{\mathcal{L}}^+]=-\widetilde{\mathcal{L}}^+,
\nonumber\\
\ \{\mathcal{G},\mathcal{G}^\star\}\lineup =2\mathcal{L}^+,
\ \ \ \ \ \ \ 
\{\mathcal{G},\mathcal{G}\}=2\widetilde{\mathcal{L}}^+,
\ \ \ \ \ \ \ \ 
\{\mathcal{G}^\star,\mathcal{G}^\star\}=2\widetilde{\mathcal{L}}^+,
\nonumber\\
\ [\mathcal{L},\mathcal{G}]\lineup =\half \mathcal{G},
\ \ \ \ \ \ \ \ \ 
[\mathcal{L}^\star,\mathcal{G}] =-\half\mathcal{G},
\ \ \ \ \ \ \ \ 
\ [\widetilde{\mathcal{L}}^+,\mathcal{G}] =0,
\nonumber\\
\ [\mathcal{L},\mathcal{G}^\star]\lineup =\half\mathcal{G}^\star,
\ \ \ \ \ \ \,
[\mathcal{L}^\star,\mathcal{G}^\star]=-\half\mathcal{G}^\star,
\ \ \ \ \ \ \ 
[\widetilde{\mathcal{L}}^+,\mathcal{G}^\star]=0.
\end{eqnarray}
This can be thought of as a supersymmetric extension of the 
special projector algebra\cite{Schnabl,RZ}. Assuming 
\eq{KGop}, this algebra can be compactly summarized by the relations
\begin{equation}G^2=K\ \ \ \ \ [K,G]=0\ \ \ \ \ \half\mathcal{L}^-K=K
\ \ \ \ \ 
\half\mathcal{L}^-G = \frac{1}{2}G\end{equation}

\section{Splitting Charges and Midpoint Insertions}

When computing the action and gauge invariant overlap, we implicitly assumed
cyclicity of the vertices $\llangle\cdot\rrangle$ and 
$\llangle\cdot\rrangle_\mathcal{V}$. However, the presence of midpoint 
insertions makes this subtle. In the $K,B,c,G$ subalgebra, cyclicity of 
$\llangle\cdot\rrangle$ and $\llangle\cdot\rrangle_\mathcal{V}$ requires
\begin{equation}
[K,Y_{-2}]=0,\ \ \ \ \ \ \ [K,\mathcal{V}]=0,\end{equation}
and likewise for $B$ and $G$. (The cyclicity of $c$ appears 
unproblematic since the $c$ insertion is far from the midpoint.)
While the geometry of the Witten vertex appears to guarantee that midpoint 
insertions commute, this expectation fails in at least some 
examples\cite{Horowitz}.\footnote{A 
related question is whether the derivations $\mathcal{L}^-$ and 
$\mathcal{B}^-$ annihilate $\llangle\cdot\rrangle$ and 
$\llangle\cdot\rrangle_\mathcal{V}$. This can be shown along similar lines
to the argument presented here.}

To keep the discussion general, consider a string field $\Phi$ 
corresponding to a vertical line integral insertion of a primary 
$\phi(z)$ of weight $h>0$ in the cylinder coordinate frame:
\begin{equation}\Phi\ \rightarrow\ 
\int_{-i\infty}^{i\infty}\frac{dz}{2\pi i}\phi(z).\end{equation}
Explicitly we can write
\begin{equation}\Phi = {\bm \phi}_L|I\rangle,\ \ \ \mathrm{where}\ \ \ \ 
{\bm\phi}_L = \int_L \frac{d\xi}{2\pi i}\left(\sqrt{\frac{\pi}{2}}
\sqrt{1+\xi^2}\right)^{2(h-1)}\phi(\xi),\end{equation}
and the contour $L$ is over the positive half of the unit circle. Let 
$m=m(i)|I\rangle$ correspond to an insertion of a dimension zero primary $m(z)$
at the midpoint. Then we can show that $[\Phi,m]=0$ if and only if
\begin{equation}\lim_{\sigma\to\frac{\pi}{2}}[\,{\bm\phi}_L\,,
\,m(e^{i\sigma})\,]=0.
\label{eq:vanish_com}\end{equation}
Suppose $\phi(z)$ and $m(z)$ have an OPE of the form,
\begin{equation}\phi(z+w)m(z)\sim \sum_{n=1}^\infty \frac{1}{w^n}V_n(z),
\end{equation}
where $V_n(z)$ are local operators of dimension $h-n$. Computing the 
commutator \eq{vanish_com} we can prove the following:
\begin{claim2} The limit of the commutator 
\eq{vanish_com} vanishes if and only if one of the two following criteria are 
satisfied:
\begin{description}
\item{\bf a)} If $h\in \mathbb{Z}+\frac{1}{2}$, then $V_n(z)=0$ for all $n>h$.
\item{\bf b)} If $h\in \mathbb{Z}$, then $V_n(z)=0$ for all $n$ in the range 
$2h> n\geq h$.
\end{description}
\end{claim2}
\noindent In the current context, the role of $\Phi$ is played by $K,B$, and
$G$ and the role of $m$ is played by $Y_{-2}$ and $\mathcal{V}$. According
to the above claim, $K,B$ and $G$ commute with $Y_{-2}$ and $\mathcal{V}$ 
if and only if the OPEs between $T,G,b$, and $Y_{-2},\mathcal{V}$ take 
the following form:
\begin{eqnarray}T(z+w)Y_{-2}(z,\bar{z})\lineup\sim\mathcal{O}(w^{-1}),
\ \ \ \ \ \ 
T(z+w)\mathcal{V}(z,\bar{z})\sim\mathcal{O}(w^{-1}),\nonumber\\
G(z+w)Y_{-2}(z,\bar{z})\lineup\sim\mathcal{O}(w^{-1}),
\ \ \ \ \ \ 
G(z+w)\mathcal{V}(z,\bar{z})\sim\mathcal{O}(w^{-1}),\nonumber\\
b(z+w)Y_{-2}(z,\bar{z})\lineup\sim\mathcal{O}(w^{-1}),
\ \ \ \ \ \ 
b(z+w)\mathcal{V}(z,\bar{z})\sim\mathcal{O}(w^{-1}),
\end{eqnarray}
and likewise for the antiholomorphic currents $\tilde{T},\tilde{G},\tilde{b}$.
Let us assume that $Y_{-2}$ and $\mathcal{V}$ take the explicit forms given in 
\eq{Ym2} and \eq{V}. The OPEs with $T$ follow from the fact that $Y_{-2}$ 
and $\mathcal{V}$ are dimension $(0,0)$ primaries. The OPEs with $G$ 
follow from the fact that 
$Y_{-2}$ and $\mathcal{V}$ are superconformal primaries. Finally the OPEs
with $b$ follow from the fact that in the $bc$ CFT $Y_{-2}$ and $\mathcal{V}$
are proportional to $c\tilde{c}$, which produces only a single pole in the 
OPE with $b$. Therefore the vertices $\llangle\cdot\rrangle$ and 
$\llangle\cdot\rrangle_\mathcal{V}$ are expected to be cyclic when evaluated 
on fields in the $K,B,c,G$ subalgebra.

\section{Phantom Piece and Super-Wedge States}
\label{app:largeN}

In this appendix we prove that the phantom term \eq{phantom} can be described
by a super-wedge state \eq{ext_wedge} in the large $N$ limit. First we give 
an explicit definition of super-wedge states in the Fock space. Write
\begin{equation}e^{-\alpha K}e^{i\beta G} = f_1(\alpha,\beta)+iG\,
f_2(\alpha,\beta),
\end{equation}
where
\begin{eqnarray}
f_1(\alpha,\beta)\lineup =\Omega^\alpha\cos(\beta\sqrt{K}),\nonumber\\
f_2(\alpha,\beta)\lineup =\Omega^\alpha\frac{\sin(\beta\sqrt{K})}{\sqrt{K}}.
\end{eqnarray}
We can compute the Fock space coefficients of $(f_1,f_2)$ using the 
linear functional \eq{gen_f}:
\begin{eqnarray}L_{f_1}(x^h)\lineup
=\frac{1}{(h-1)!}\int_0^\infty dt\, t^{h-1}\cos(\beta\sqrt{t})e^{-(\alpha+1)t}
\nonumber\\
\lineup = \frac{1}{(1+\alpha)^h}
\,_1 F_1\left[h,\frac{1}{2},-\frac{\beta^2}{4(1+\alpha)}\right],
\label{eq:f1_fock}\\
L_{f_2}(x^h)\lineup
=\frac{1}{(h-1)!}\int_0^\infty dt\, t^{h-1}\frac{\sin(\beta\sqrt{t})}{\sqrt{t}}
e^{-(\alpha+1)t}\nonumber\\
\lineup = \frac{\beta}{(1+\alpha)^h}
\,_1 F_1\left[h,\frac{3}{2},-\frac{\beta^2}{4(1+\alpha)}\right],
\label{eq:f2_fock}
\end{eqnarray}
where $_1 F_1$ is the confluent hypergeometric function. 

Consider the states $(X_N,Y_N)$ appearing in the phantom piece through 
equation \eq{XY}. We can also define these states using the linear functional 
\eq{gen_f}:
\begin{eqnarray}L_{X_N}(x^h)\lineup = 
\frac{1}{(h-1)!}\int_0^\infty dt\, t^{h-1}
\frac{(1+ia\sqrt{t})^N+(1-ia\sqrt{t})^N}{2}e^{-(N+1)t},\nonumber\\
L_{Y_N}(x^h)\lineup = \frac{1}{(h-1)!}\int_0^\infty dt\,t^{h-1}
\frac{(1+ia\sqrt{t})^N-(1-ia\sqrt{t})^N}{2i aN\sqrt{t}}e^{-(N+1)t}.
\label{eq:XY_linfunc}\end{eqnarray}
To compute the large $N$ limit, substitute $s=(N+1)t$ in the integrand so that
for example
\begin{equation}L_{X_N}(x^h) = \frac{1}{2(h-1)!}\frac{1}{(N+1)^h}
\int_0^\infty ds\, s^{h-1}
\left[\left(1+ia\sqrt{\frac{s}{N+1}}\right)^N+\left(1-ia\sqrt{\frac{s}{N+1}}
\right)^N\right]e^{-s}.\end{equation}
Now approximate 
\begin{eqnarray}\left(1\pm ia\sqrt{\frac{s}{N+1}}\right)^N\lineup =
\exp\left[N\ln\left(1\pm ia\sqrt{\frac{s}{N+1}}\right)\right]\nonumber\\
\lineup = \exp\left[N\left(\pm ia\sqrt{\frac{s}{N}}+\frac{a^2}{2}\frac{s}{N}
+\mathcal{O}(N^{-1/2})...\right)\right]\nonumber\\
\lineup = e^{\pm ia\sqrt{Ns}}e^{a^2s/2}[1+\mathcal{O}(N^{-1/2})]
,\end{eqnarray}
so that 
\begin{eqnarray}L_{X_N}(x^h)\lineup = \frac{1}{(h-1)!}\frac{1}{N^h}
\int_0^\infty ds\, s^{h-1} \cos(a\sqrt{Ns}) e^{-(1-\frac{a^2}{2})s}
[1+\mathcal{O}(N^{-1/2})]\nonumber\\
\lineup = \frac{1}{N^h}\left(\frac{2}{2-a^2}\right)^h\, 
_1 F_1\left(h,\frac{1}{2},-\frac{1}{4}\frac{2}{2-a^2}a^2 N\right)
[1+\mathcal{O}(N^{-1/2})].
\end{eqnarray}
Similarly,
\begin{equation}L_{Y_N}(x^h) = \frac{1}{N^{h}}\left(\frac{2}{2-a^2}\right)^h
\,_1F_1\left(h,\frac{3}{2},-\frac{1}{4}\frac{2}{2-a^2}a^2N\right)
[1+\mathcal{O}(N^{-1/2})].
\end{equation}
Comparing with equations \eq{f1_fock} and \eq{f2_fock}, this precisely 
corresponds to the large $N$ behavior of the super-wedge state
$e^{iN a G}\Omega^{N(1-\frac{a^2}{2})}$, as claimed in equation \eq{Gwedge}.

To simplify the large $N$ limit further we use the asymptotic formula
\begin{equation}_1 F_1(a,b,z)=\frac{\Gamma(a)}{\Gamma(b-a)}e^{i\pi a}
\frac{1}{z^a}[1+\mathcal{O}(z^{-1})],\ \ \ \ \ (\mathrm{large}\ |z|,\ \ 
\mathrm{Re}(z)<0)\label{eq:hyper_ass}.\end{equation}
Thus,
\begin{eqnarray}L_{X_N}(x^h)\lineup = 
\frac{2(-1)^h}{(aN)^{2h}}\frac{(2h-1)!}{(h-1)!}
[1+\mathcal{O}(N^{-1/2})],\\
L_{Y_N}(x^h) \lineup = \frac{4(-1)^h}{(aN)^{2h}}\frac{(2h-3)!}{(h-2)!}
[1+\mathcal{O}(N^{-1/2})].
\end{eqnarray}
This agrees with the large $N$ behavior of the sums quoted in 
\eq{reg_limits}. We have verified this behavior numerically.

\section{Details of Energy Computation}
\label{app:energy}

In this appendix we give some details of the computation of the action for 
the simple half-brane solution. To avoid cluttered formulas, it is helpful to 
introduce the notation,
\begin{equation}(\Phi_1,\Phi_2) = \leftllangle 
\Phi_1\frac{1}{1+K}\Phi_2\frac{1}{1+K}\rightrrangle.\end{equation}
The kinetic term of the action can be expressed as the sum of 
two terms:
\begin{equation}\llangle \Psi Q\Psi\rrangle = -(1)+(2),\end{equation}
where
\begin{eqnarray}
(1) = \Big(cGBc,Q(cGBc)\Big),\ \ \ \ \ \ 
(2) = \Big(cGBcG, Q(cGBc)G\Big).
\end{eqnarray}
Now replace the $G$ insertions with supersymmetry variations $\delta$ acting
inside the vertex, following \eq{Gdelta}. This generates many terms, 
some of which vanish by $\phi$-momentum conservation or by $\mathcal{L}^-$ or
$\mathcal{B}^-$ invariance of the vertex. In the end the answer simplifies
to \begin{equation}(1) = -(cK,\gamma^2)+5(B\gamma^2, c\partial c)+2(\gamma,
\partial\gamma c)-4(cB\gamma, \partial\gamma c)-4(cB\gamma,\gamma Kc),
\end{equation}
and
\begin{eqnarray}(2) =\lineup -(cK,\gamma^2 K)-4(cB\gamma, c\partial\gamma K)
+2(cB\gamma, \partial c\gamma K)+(B\gamma^2,Kc\partial c)
-2(cB\gamma,K\partial \gamma c)\nonumber\\
\lineup+4(cB\gamma,K\gamma Kc)+(cB\gamma,\partial\gamma\partial c)
+2(cB\gamma,\partial^2\gamma c)-(cB\gamma,\gamma\partial^2 c)
-\frac{1}{2}(B\gamma^2,c\partial^2 c).\nonumber\\
\end{eqnarray}
We compute the inner products $(,)$ by mapping them to the appropriate 
correlation function on the cylinder, evaluating the correlator with 
\eq{corr}, and performing the Schwinger integrals. For $(1)$ the inner 
products turn out to be
\begin{eqnarray}(cK,\gamma^2) \lineup = \frac{2}{\pi^2},
\ \ \ \ \ \ \ \ 
(B\gamma^2,c\partial c) = \frac{1}{\pi^2},
\ \ \ \ \ 
(\gamma,\partial\gamma c)=-\frac{2}{\pi^2},
\nonumber\\
(cB\gamma,\partial\gamma c)\lineup = -\frac{1}{\pi^2},
\ \ \ \ 
(cB\gamma,\gamma Kc)= \frac{6}{\pi^4},\end{eqnarray}
giving
\begin{equation}(1) = -\frac{2}{\pi^2}+\frac{5}{\pi^2}-\frac{4}{\pi^2}
+\frac{4}{\pi^2}-\frac{24}{\pi^4}=\frac{3}{\pi^2}-\frac{24}{\pi^4}
.\end{equation}
For $(2)$ we have the inner products
\begin{eqnarray}
(cK,\gamma^2 K)\lineup =-\frac{1}{\pi^2},
\ \, \ \ \ \ 
(cB\gamma, c\partial\gamma K)=-\frac{1}{\pi^2},
\ \ \ \ \,
(cB\gamma, \partial c\gamma K)= -\frac{1}{2\pi^2},
\nonumber\\ 
(B\gamma^2,Kc\partial c)\lineup =-\frac{1}{2\pi^2},
\ \ \ \ 
(cB\gamma,K\partial \gamma c) = \frac{1}{\pi^2},
\ \ \ \ \ \ 
(cB\gamma,K\gamma Kc)=\frac{1}{2\pi^2}-\frac{6}{\pi^4},
\nonumber\\
(cB\gamma,\partial\gamma\partial c)\lineup =-\frac{1}{\pi^2},
\ \ \ \ \ \ \
(cB\gamma,\partial^2\gamma c)=\frac{1}{\pi^2},
\ \ \ \ \ \ \ \ \,
(cB\gamma,\gamma\partial^2 c) = 
(B\gamma^2,c\partial^2 c)= 0,\nonumber\\
\end{eqnarray}
giving
\begin{eqnarray}(2)\lineup = \frac{1}{\pi^2}+\frac{4}{\pi^2}-\frac{1}{\pi^2} 
-\frac{1}{2\pi^2}-\frac{2}{\pi^2}+\frac{2}{\pi^2}-\frac{24}{\pi^4}
-\frac{1}{\pi^2}+\frac{2}{\pi^2}+0+0\nonumber\\
\lineup = \frac{5}{\pi^2}-\frac{1}{2\pi^2}-\frac{24}{\pi^4}.
\end{eqnarray}
Adding things up
\begin{eqnarray}\llangle \Psi Q\Psi\rrangle \lineup = -(1)+(2) \nonumber\\
\lineup = -\frac{3}{\pi^2}
+\frac{24}{\pi^4}+\frac{5}{\pi^2}-\frac{1}{2\pi^2}-\frac{24}{\pi^4}\nonumber\\
\lineup = \frac{3}{2\pi^2}.
\end{eqnarray}
The energy is 
\begin{equation}E=-\frac{1}{6}\llangle \Psi Q\Psi\rrangle = -\frac{1}{4\pi^2},
\end{equation}
which is precisely $-1/2$ times the tension of the D-brane.

\section{Auxiliary Tachyon Coefficient}
\label{app:level}

In this appendix we compute the coefficient of the auxiliary tachyon state
$c_1|0\rangle$ for the Schnabl-like half-brane solution in the $L_0$ level
expansion. To achieve this we write the Schnabl-like solution in 
the form
\begin{eqnarray}\Psch \lineup = -\sum_{n=0}^\infty\psi_n' +\Gamma\nonumber\\
\lineup = 
-\sum_{n=0}^\infty\sum_{0\leq k\leq n/2}{n\choose 2k}a^{2k}
\left.\frac{d^{k+1}}{dr^{k+1}}\right|_{r=0}cB\Omega^{n+r}c(1+iaG)\Omega
\nonumber\\ \lineup\ \ \ \ \ 
-i\sum_{n=0}^\infty \sum_{0\leq k\leq \frac{n-1}{2}}{n\choose 2k+1}a^{2k+1}
\left.\frac{d^{k+1}}{dr^{k+1}}\right|_{r=0} cBG\Omega^{n+r}c(1+iaG)\Omega
\nonumber\\ \lineup\ \ \ \ \ 
+B\gamma^2(1+iaG)\Omega. \label{eq:Psch_Fock}
\end{eqnarray}
We can drop the phantom term since it vanishes in the Fock space. The states
inside the sums can be expressed using the operator formalism of 
Schnabl\cite{Schnabl,Schnabl_wedge}, which yields an expression for the 
solution in terms of a canonically ordered set of mode operators acting on 
the $SL(2,\mathbb{R})$ vacuum. Using \eq{reg_limits} one can argue that 
the infinite sums above converge for any coefficient in the Fock space as 
long as the parameter $a$ is restricted to the range 
$-\sqrt{2}\leq a\leq\sqrt{2}$.

Expanding \eq{Psch_Fock} in the Fock space we can extract the coefficient 
of the auxiliary tachyon. Define two functions
\begin{eqnarray}\phi_1(r) \lineup = \frac{1}{\pi X^2}\left[\frac{1}{\pi}
\cos^2\left(\frac{\pi}{2}X_+\right)\sin(\pi X_-)
- \frac{1}{\pi}\sin(\pi X_+)\cos^2\left(\frac{\pi}{2}X_-\right)\right.
\nonumber\\ \lineup \ \ \ \ \ \ \ \ \ \ \ \left.
-(X_+-1)\cos^2\left(\frac{\pi}{2}X_-\right)
+(X_-+1)\cos^2\left(\frac{\pi}{2}X_+\right)\right],\nonumber\\
\phi_2(r)\lineup = -\frac{d}{dr}\phi_1(r)+\frac{1}{X}
\left[-\frac{1}{2\pi} \sin(\pi X_+)\cos\left(\frac{\pi}{2}X_-\right)
-\frac{1}{2}(X_+ - 1)\cos\left(\frac{\pi}{2}X_-\right)\right.\nonumber\\
\lineup \ \ \ \ \ \ \ \ \ \ \ \ \ \ \ \ \ \ \ \ \ \ \ \ \ 
\left.-\frac{1}{4\pi}\sin(\pi X_+)\sin(\pi X_-)
-\frac{1}{2\pi}\cos^2\left(\frac{\pi}{2}X_+\right)(\cos(\pi X_-)+1)\right.
\nonumber\\ 
\lineup \ \ \ \ \ \ \ \ \ \ \ \ \ \ \ \ \ \ \ \ \ \ \ \ \ 
\left.-\frac{1}{4}(X_+-1)\sin(\pi X_-)\right],
\end{eqnarray}
where for short we have denoted
\begin{equation}X=\frac{2}{r+2},\ \ \ \ \ X_+ = \frac{r+1}{r+2},\ \ \ \ \ 
X_- =\frac{-r+1}{r+2}.\end{equation}
The auxiliary coefficient is then
\begin{eqnarray}\phi\lineup = 
\sum_{n=0}^\infty\left[-\sum_{0\leq k\leq n/2}{n\choose 2k}a^{2k}
\frac{d^{k+1}\phi_1(n)}{dn^{k+1}}
+\sum_{0\leq k\leq \frac{n-1}{2}}{n\choose 2k+1}a^{2k+2}
\frac{d^{k+1}\phi_2(n)}{dn^{k+1}}\right].\nonumber\\
\label{eq:aux_tach}\end{eqnarray}
Since $\phi_1$ and $\phi_2$ vanish as $1/r^3$ for large $r$, \eq{reg_limits} 
implies that the terms in the summand vanish as $1/n^8$ for sufficiently 
large $n$. We have checked this behavior numerically. Therefore 
\eq{aux_tach} is a convergent sum if $-\sqrt{2}\leq a\leq\sqrt{2}$. 
Unfortunately, the multiple derivatives of $\phi_1$ and $\phi_2$ make a 
direct numerical evaluation of \eq{aux_tach} very time-consuming. 
To evaluate \eq{aux_tach} with sufficient precision, we found it necessary 
to expand $\phi_1$ and $\phi_2$ in powers of $\frac{1}{r+2}$ out to 
$\frac{1}{(r+2)^{40}}$, which simplifies the numerical computation of 
derivatives. For $a=1$ we found the auxiliary tachyon coefficient to be
\begin{equation}\phi = -.0599156.\end{equation}
More interesting is the plot of the auxiliary tachyon coefficient as a 
function of $a$, shown in figure \ref{fig:aux_tach}. At $a=0$ the coefficient
corresponds to that of a tachyon vacuum solution, and has positive expectation
value, as we would expect from the usual picture of the cubic potential 
in bosonic string field theory. However, as $a$ becomes large, the 
expectation value becomes zero and even negative. This suggests that the 
negative energy of the half-brane solution is not principally due to the
condensation of the auxiliary tachyon. This is one way to see that the 
Schnabl-like solution must not satisfy the reality condition.

\begin{figure}
\begin{center}
\resizebox{3.2in}{1.9in}{\includegraphics{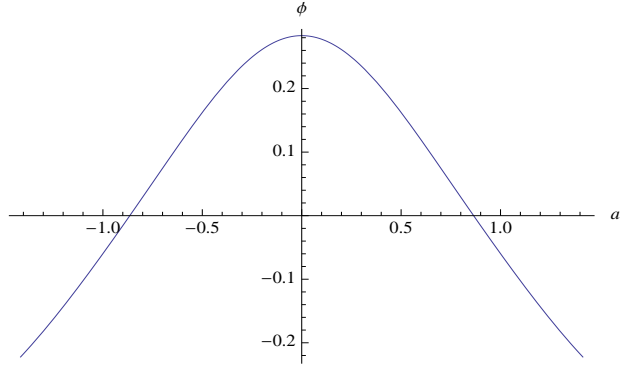}}
\end{center}
\caption{\label{fig:aux_tach} Coefficient of the auxiliary tachyon 
$c_1|0\rangle$ in the Schnabl-like solution, as a function of 
$a\in[-\sqrt{2},\sqrt{2}]$.} 
\end{figure}

We have also computed the coefficients for a few descendents of the auxiliary
tachyon. Let us denote coefficients of the states 
\begin{equation}(L_{-2})^n c_1|0\rangle,\ \ \ (L_{-4})^n c_1|0\rangle 
\end{equation}
by $x_n$ and $y_n$ respectively for $n\geq 1$. At $a=1$ we have found the 
explicit values
\begin{eqnarray}x_1\lineup = .067747,
\ \ \ \ \ \ \ \ \ \ \ 
 y_1 = -.019133,
\nonumber\\
x_2\lineup = .0060976,
\ \ \ \ \ \ \ \ \ \,
y_2 = .000064506,
\nonumber\\
x_3\lineup = -.000042514,
\ \ \ \ y_3 = 7.9488\times 10^{-7}.
\end{eqnarray}
We have computed $x_n$ and $y_n$ out to $n=60$ and found that they decay
significantly faster then the corresponding coefficients of 
$(L_{-2})^n|0\rangle$ and $(L_{-4})^n|0\rangle$ of the sliver state. We 
therefore believe that the Schnabl-like solution is a regular 
state in the $L_0$ level expansion.

\end{appendix}

\end{document}